\documentclass[10pt]{IEEEtran}
\ifCLASSINFOpdf
\else
\fi
\usepackage{graphicx}
\usepackage[centertags]{amsmath}
\usepackage{amsmath}
\usepackage{bbold}
\usepackage{float}
\usepackage{amsthm}
\newtheorem*{remark}{Remark}
\usepackage{algorithmic}

\usepackage{textcomp}
\newtheorem{definition}{Definition}
\usepackage{verbatim}
\usepackage{color}
\usepackage{graphicx}
\usepackage{tabularx}
\newcolumntype{P}[1]{>{\centering\arraybackslash}p{#1}}
\newcommand\scalemath[2]{\scalebox{#1}{\mbox{\ensuremath{\displaystyle #2}}}}
\usepackage{cite}
\usepackage{multicol}
\usepackage{amsmath,amssymb,amsfonts}
\usepackage{amsthm} 
\usepackage{tikz}
\usetikzlibrary{calc}

\usepackage{xcolor}
\newtheorem{lemma}{Lemma}
\usepackage{tabularx,booktabs}
\usepackage{dblfloatfix}
\usepackage{array}
\newcolumntype{C}{>{\centering\arraybackslash}X} 
\usepackage{lipsum}
\begin{document}
		%
		\title{AULAs: A Novel Family of Augmented ULAs for Enhanced Localization of Non-Circular Sources with Reduced Mutual Coupling Effects\\} 
		\author {Abdul Hayee Shaikh, and  
			Xiaoguang Liu$^\ast$, \textit{Senior Member}, \textit{IEEE}} 
	\maketitle   
	\footnotetext[1]{	The authors are with the School of Microelectronics, Southern University of Science and Technology, Shenzhen, China.
	}
	\markboth{Journal of \LaTeX\ Class Files,~Vol.~14, No.~8, August~2021}%
	{Shell \MakeLowercase{\textit{et al.}}: A Sample Article Using IEEEtran.cls for IEEE Journals}
	\begin{abstract}
		\textcolor{black}{In this paper, we introduce a family of novel sparse array designs called the augmented ULAs (AULAs) for the localization of non-circular signals (NCS). \textcolor{black}{Accurate direction of arrival (DOA) estimation and the ability to resolve multiple targets are critical in modern wireless communication systems,} where antenna arrays must provide a larger co-array aperture, increased degrees of freedom (DOFs), and weaker mutual coupling. \textcolor{black}{However, most existing sparse arrays are optimized solely for the difference co-array, making them less efficient at utilizing the sum co-array resulting from the non-zero pseudo-covariance of NCS. Meanwhile, state-of-the-art designs for joint optimization of the sum and difference co-arrays remain constrained by a three-way performance trade-off.} The proposed AULAs configure single sparse and two dense ULAs alongside two separate elements to achieve a perfect splicing of holes and lags in the difference and sum co-array. This results in a larger virtual aperture and increased DOFs for NCS. Building on this structure, other variants of AULAs are developed, each exhibiting distinct characteristics. The shifted AULAs (SAULAs) judiciously displace the AULAs structure to minimize co-array redundancy and further enhance the DOFs. A transformed SAULAs (TSAULAs) design is proposed, which mitigates mutual coupling effects by converting the dense ULAs of SAULAs into sparse ULAs. By reconfiguring the elements of TSAULAs, the complementary TSAULAs (Co-TSAULAs) design inherits the desirable properties of SAULAs and TSAULAs.} \textcolor{black}{All these structures belong to a unified design framework, within which one configuration can be adapted into another during the design phase to meet different performance requirements.} Meanwhile, they provide in-built physical locations for convenient extension to a larger aperture. Closed-form expressions for precise element placements, DOFs, and weight functions are derived. Simulation results validate the effectiveness of the proposed approach.
	\end{abstract} 
	
	\begin{IEEEkeywords}
		Sparse arrays, DOA estimation, non-circular signals, sum-difference co-array, mutual coupling.
	\end{IEEEkeywords}
	\IEEEpeerreviewmaketitle
	\section{Introduction}
	
	\IEEEPARstart {T}{h}e evolution of modern wireless communications and vehicular networks relies on accurate direction of arrival (DOA) estimation \cite{a, b}. By delivering high-resolution DOA estimation, base stations can accurately locate the users \cite{d},   directly assisting interference suppression and enabling robust beamforming for multi-user MIMO and vehicle-to-everything communication \cite{b, c}. Hence, the demand for superior angular resolution and capability of resolving more targets than the number of physical sensors is growing rapidly in next-generation wireless systems \cite{e}.
	
Typically, a larger aperture and higher degrees of freedom (DOFs) \textcolor{black}{ directly improve} the resolution and capability to detect multiple targets, with the array geometry being crucial to these enhancements. \textcolor{black}{While uniform linear arrays (ULAs) are popular due to their simple array structure, practical challenges arise in wireless communication systems because achieving more DOFs and a larger array aperture require a proportional increase in the number of sensors, which can increase the hardware cost, weight, and power consumption in practice \cite{SWAP1, SWAP2,  SWAP3}.} Meanwhile, the half wavelength spacing $(d = \lambda/2)$ between the elements of ULAs causes strong mutual coupling (MC), which degrades resolution performance \cite{UADF, SWAP1}. \textcolor{black}{Hence, it becomes imperative to develop improved array structures or signal processing algorithms to achieve high-resolution DOA estimation under these practical system constraints \cite{DCA2}.}

	To address these challenges, sparse arrays (or non-uniform linear arrays (NULAs)) offer a promising solution since they can generate a larger aperture and more DOFs than the ULAs \cite{6}. \textcolor{black}{Generally, sparse arrays are designed to leverage the difference co-array $(\mathbb{DC})$\cite{NA, DCA1}.} The minimum redundancy arrays (MRAs) generate more virtual elements with a wider $\mathbb{DC}$ aperture \cite{8}. However, they lack closed-form design expressions \cite{9, GMRA}. 
	
	Co-prime arrays (CPAs) \cite{CPA} and nested arrays (NA) \cite{NA} possess simple closed-form design expressions. Based on the difference co-array concept, different modifications of NA and CPA have been introduced to increase the DOFs \cite{SFCA, ktimes, DCA2, SNA, SpNA, ENA, TSCPA, Interpolation1, Interpolation2}. The NA combines dense and sparse ULAs to produce a hole-free $\mathbb{DC}$ with higher DOFs \cite{NA}. However, almost half of its elements are densely packed, making the NA more susceptible to MC. The CPA concatenates two subarrays that satisfy coprimality for inter-element spacing and the number of elements \cite{CPA, ECPA}. Although the CPA alleviates MC effects, its $\mathbb{DC}$ is not hole-free. This not only reduces the spatial efficiency (S.E) and number of uniform DOFs (uDOFs) that various DOA estimators rely on for estimation \cite{NA}, \cite{ktimes}, \textcolor{black}{but also wastes valuable aperture resources.} \textcolor{black}{While the information lost due to holes can be recovered via interpolation algorithms, such techniques are not feasible for real-time wireless systems where computational resources are limited.}
	
	In \cite{MISC}, a new design based on maximum inter-element spacing constraint (MISC) is introduced. The MISC arrays produce more DOFs and a hole-free $\mathbb{DC}$, while the MC effects are comparable to the CPA. \textcolor{black}{By combining multiple ULAs and additional sensors, SA-4U1S and SA-4U2S sparse array configurations were proposed to mitigate the MC while maintaining sufficient uDOFs \cite{DCA1}.} Although the pentad-displaced ULAs \textcolor{black}{(Pdis-ULAs)} achieve more DOFs than the MISC and SA-4U1S arrays \cite{Pdis}, they induce stronger MC.
	
	\textcolor{black}{Importantly, the aforementioned designs configure the array sensors primarily to optimize the $\mathbb{DC}$ derived from the second‑order covariance matrix. However, many signals used in wireless communication systems are non-circular (e.g., minimum shift keying, binary phase shift keying, pulse amplitude modulation, etc.), which provide additional information through a pseudo‑covariance (elliptic) matrix, generating a sum co-array ($\mathbb{SC}$) \cite{NonzeroECM2,NADiS, OCA}. Consequently, these designs are inefficient at capitalizing on the information contained in the non-zero pseudo-covariance matrix, resulting in a fragmented sum-difference co-array $(\mathbb{SDC})$ for non-circular sources (NCS) \cite{NonzeroECM2, NADiS, OCA}.} This has motivated recent research into designing new arrays that jointly optimize the $\mathbb{SC}$ and $\mathbb{DC}$, to obtain a longer, ideally hole-free $\mathbb{SDC}$ for enhanced DOA estimation of NCS.
	
	The nested array with displaced subarrays (NADiS) increases the inter-element spacing of the sparse ULA and redefine the inter-ULA displacement of the NA \cite{NADiS}. While this lengthened the uDOFs and continuous virtual aperture (CVA) for NCS, the sensitivity to MC remains the same as the NA. Likewise, the new nested array (NNA) also rearranges elements of NA to better utilize the $\mathbb{SDC}$ model \cite{NNA}. However, the NNA can only be designed when the array number is even. The concept of the nested array with graded spacing (GSNA) is proposed \cite{GSNA}, in which two elements, one between the two ULAs of prototype NA \cite{NA} and the other succeeding its sparse ULA, are appended. Although the GSNA obtains more DOFs than the NADiS, its $\mathbb{SDC}$ suffers from numerous holes \cite{GSNA}.
	
	In \cite{NSANCS}, a new sparse array for non-circular sources (NSANCS) is designed by relocating the dense and sparse ULAs of the NA. Though the NSANCS produces a hole-free virtual array, it offers almost the same number of DOFs as GSNA.
	\textcolor{black}{By reconfiguring the elements in the sparse ULA of the prototype NA, an optimized sparse nested array (OSNA) \cite{OSNA} is introduced to optimize the $\mathbb{SDC}$. However, the OSNA also contain a long dense ULA in its structure, making it as susceptible to the MC effects as the NA.} The sparse array for non-circular sources (SANC) \cite{SANC} provides a large virtual array with enhanced DOFs. However, designing SANC requires exhaustive search algorithms \cite{SANC}. \textcolor{black}{The zero-redundancy sparse array (ZRSA) \cite{ZRSA} produces a hole-free $\mathbb{SDC}$ and possesses closed-form expressions. Nevertheless, a strong MC results from a dense subarray in its physical array. Although the MC effects can be mitigated by rearranging the dense subarray, it can only be designed for $N$ being a multiple of 4.}
	
	On the other hand, the super transformed nested array (STNA) suppresses the MC effects \cite{STNA} and is free from design constraints. However, the $\mathbb{SDC}$ it generates is not a hole-free ULA. 
	\textcolor{black}{Relocation of sparse nested arrays (RSNA) is studied in \cite{RSNA}, which is based on incorporating two subarrays and a separate element. Though the RSNA increases the DOFs, these improvements are relatively moderate.} The triad-displaced ULAs (Tdis-ULAs) design \cite{Tdis} produces a hole-free $\mathbb{SDC}$ with higher DOFs capacity than the ZRSA and STNA, but their MC is stronger than that of the STNA \cite{STNA}. 
	
	Different derivatives of the CPA are also introduced for NCS.
	The special co-prime array (SCPA) shifts one of the co-prime subarrays to obtain a longer hole-free segment in $\mathbb{SDC}$ \cite{SCPA}. By rearranging the subarrays of CPA, the enhanced co-prime array (EnCPA) \cite{EnCPA} further improves the uDOFs. \textcolor{black}{An optimized shifted co-prime array with sum-difference co-array (OCA-SDCA) \cite{OCA} is proposed, which alleviates the MC and increases the DOFs compared to the EnCPA.  \textcolor{black}{Recently, a novel thinned coprime nested array (TCNA) is proposed \cite{TCNA}, which embeds coprime structure into the dense subarray of the NA. Although TCNA alleviates the MC effects and generates more uDOFs than OCA-SDCA, its uDOFs remain smaller than those of the STNA for the same number of sensors. Also, the construction of TCNA imposes parity restrictions on choosing the coprime integers, which limits its flexibility. Meanwhile, its $\mathbb{SDC}$ contains numerous holes. This is disadvantageous, particularly in compact wireless communications systems, where space is limited and aperture resources must be fully exploited.}}
	
	\textcolor{black}{Therefore, despite various sparse arrays designs proposed to jointly optimize $\mathbb{SC}$ and $\mathbb{DC}$, they remain constrained by a three-way performance trade-off. Designs like ZRSA \cite{ZRSA}, NSANCS \cite{NSANCS}, and Tdis-ULAs \cite{Tdis} produce high uDOFs with a hole-free $\mathbb{SDC}$, but experience strong MC. Conversely, \textcolor{black}{TCNA} \cite{TCNA}, OCA-SDCA \cite{OCA}, STNA \cite{STNA}, and RSNA \cite{RSNA} alleviate MC effects but lose the hole-free co-array property, with varying impacts on uDOFs. Meanwhile, OSNA \cite{OSNA}, NADiS \cite{NADiS}, and GSNA \cite{GSNA} suffer from both strong MC and several holes in the $\mathbb{SDC}$.}	
	
	\textcolor{black}{In this paper, we address this trade-off by introducing a novel augmented ULAs (AULAs) design framework for NCS. Unlike existing approaches, the AULAs framework presents a unified family of array designs that collectively address three core issues: (i) enabling a hole-free sum-difference co-array with an extended virtual aperture, (ii) significantly increasing uDOFs, and (iii) effectively reducing MC effects.} The prototype AULAs structure consists of two dense ULAs and a single sparse ULA alongside two separate elements. By strategically relocating the elements, other variants of AULAs family can be easily designed, each owing unique features. Specifically, the main contributions are summarized below: 
	\begin{itemize}
		\item The AULAs produce a hole-free $\mathbb{SDC}$ with higher uDOFs than the existing designs.
		\item The shifted AULAs (SAULAs) are introduced to minimize the redundancy between the $\mathbb{SC}$ and $\mathbb{DC}$ \textcolor{black}{by applying a unique, optimal shift} to the AULAs structure during the design phase. This results in an increase in DOFs and virtual aperture without requiring additional elements, while still maintaining a hole-free $\mathbb{SDC}$.
		\item By transforming the dense ULAs of SAULAs into sparse ULAs and effectively relocating an element, the transformed SAULAs (TSAULAs) structure is devised. TSAULAs significantly reduce the MC effects while achieving more uDOFs than AULAs. Although TSAULAs generate two holes, these occur only at the edges of $\mathbb{SDC}$, leaving a long hole‑free ULA intact.
		\item A complementary TSAULAs (Co-TSAULAs) design is also introduced, which inherits the properties of SAULAs and TSAULAs to produce a hole-free  $\mathbb{SDC}$ while being more robust against MC than SAULAs.
		\item All structures within the unified AULAs framework are easily interchangeable at the design stage and feature in-built physical locations for convenient aperture extension.
		\item The proposed AULAs family enjoy closed-form expressions for element locations, weight functions, and uDOFs.
	\end{itemize}
	
	The rest of the paper is organized as follows. 
	Necessary preliminaries are presented in Section II. Section III introduces the AULAs and SAULAs structures, including their array designing and co-array properties. Likewise, section IV discusses the TSAULAs and Co-TSAULAs designs. Section V presents the simulation results to demonstrate the excellent merits of the proposed design framework. Section VI concludes the paper.
	
	The bold upper-case (lower) characters throughout this manuscript show the matrices (vectors). The superscripts $(.)^*$, $(.)^T$, and $(.)^H$ respectively symbolize the conjugation, transpose, and conjugate transpose. $ \otimes $ signifies the Kronecker product. $E[.]$ and $\%$ denote the statistical expectation and remainder, respectively. \textbf{I} represents the identity matrix and vec$(.)$ converts a matrix into vector form. \textcolor{black}{$\rm{diag}(\textbf{a})$ is a diagonal matrix with its diagonal element obtained from $\textbf{a}$}. \textcolor{black}{$|.|$ returns the cardinality of set.} ${\left\| . \right\|_F}$ is the frobenius norm.  $\left\lceil . \right\rceil $ is ceiling function, which rounds to the nearest integer towards positive infinity. $\left\langle {u_{1}, u_{2}}\right\rangle$ signifies integer set $\{u \in \mathbb{Z}| u_{1} \le u \le u_{2} \}$.
	%
	%
	%
	%
	\vspace{-1em}\section{Preliminaries}
	\subsection{Data Model}
	Consider \textit{Z} narrowband and uncorrelated sources from  directions $\theta_z (z \in \left\langle {1, Z} \right\rangle)$ impinge on an \textit{N}-element array. The element locations are defined by $m_{n}d$, where $m_{n}$ indices belong to $\mathbb{P} = \{m_{n}, n = 0,1,\cdots, N \textcolor{black}{- 1}\}$. Then, the received signal vector $\textbf{s}(t)$ for \textit{t}th snapshot is modeled as \cite{NADiS}
	\begin{equation} \label{O.vector}
		\textbf{s}(t) = \textbf{A}\textbf{x}(t) + \textbf{n}(t), ~~~~~~~~~~~~~ t = 1,2,\cdots, T
	\end{equation}
	where $\textbf{x}(t) = [x_1, x_2, \cdots, x_Z]$ and $\textbf{n}(t)$ represents the signal waveform vector and noise vector of $p_n$ variance, respectively. ${\textbf{A}} = \left[ {{\textbf{a}}({\theta _1}), \textbf{a}({\theta _2}), \cdots, {\textbf{a}}({\theta _Z})} \right]$ is the ${N\times Z}$ direction matrix, with ${\textbf{a}}({\theta _z}) = \left\{ {e^{{-jl{m_n}{\rm{sin}{\theta _\textit{z}}}}}},{m_n} \in \mathbb{P} \right\}$ corresponds to the steering vector for $z$th NCS and ${l} = 2\pi d/\lambda$. The covariance matrix corresponding to $\textbf{s}(t)$ can be expressed as \cite{NADiS, OSNA}
	\begin{equation}
		\begin{array}{l} \label{cov}
			\begin{array}{l}
				{{\textbf{R}}_{\rm{s}}} = E[{\textbf{s}}(t){{\textbf{s}}^H}(t)] = {\textbf{A}}{{\textbf{R}}_{\rm{x}}}{{\textbf{A}}^H} + {p_n}{{\textbf{I}}_N},\\
				\begin{array}{*{20}{c}}
					{}&{ = \sum\limits_{z = 1}^Z {{p_z}{\textbf{a}}({\theta _z}){\textbf{a}^H}({\theta _z})} }
				\end{array} + {p_n}{{\bf{I}}_N},
			\end{array}
		\end{array}
	\end{equation}
	where ${\textbf{R}_{\rm{x}}} = {\rm{diag}}(\{p_1, p_2, \cdots, p_z\})$, with $p_z$ is the \textit{z}th source power. 
	\textcolor{black}{Notably, the NCS additionally contains a non-zero pseudo covariance matrix, which can be defined as \cite{NADiS, OCA}
		\begin{equation} \label{Pcov}
			\begin{array}{l}
				\begin{array}{*{20}{l}}
					{{{\hat {\textbf{R}}}_{\rm{s}}} = E[{\textbf{s}}(t){{\textbf{s}}^T}(t)] = {\textbf{A}}{{\hat {\textbf{R}}}_{\rm{x}}}{{\textbf{A}}^T},}
				\end{array}\\
				\begin{array}{*{20}{c}}
					{}&{ = \sum\limits_{z = 1}^Z {{{\hat p}_z}\textbf{a}({\theta _z}){\textbf{a}^T}({\theta _z})} }
				\end{array},
			\end{array}
		\end{equation}
		where ${\hat {\textbf{R}}_{\rm{x}}} = {\rm{diag}}(\{{\hat p_1}, {\hat p_2}, \cdots, {\hat p_z}\})$. \textcolor{black}{${{\hat {\textbf{R}}}_{\rm{s}}}$ is a complex symmetric pseudo‑covariance matrix that characterizes NCS and is zero for circular signals \cite{SANC, NADiS, Tdis, OCA}. This ${{\hat {\textbf{R}}}_{\rm{s}}}$ provides additional information for NCS, which can be exploited to achieve performance enhancements in DOA estimation \cite{NADiS, OCA}. Hence, the exploitation of pseudo-covariance is inherent to the data model for NCS \cite{NADiS}, \cite{GSNA,NSANCS,OSNA,SANC,ZRSA,STNA,RSNA,Tdis,SCPA,EnCPA,OCA}.}
		Subsequently, an extended output vector can be formed as \cite{OCA, NADiS, Tdis}
		\begin{equation}
			{\textbf{s}_o}(t) = [{\textbf{s}^T}(t), {\textbf{s}^H}(t)]\textcolor{black}{^T}.   
		\end{equation}
		The covariance matrix of ${\textbf{s}_{\rm{o}}}(t)$ can be expressed as \cite{OCA, NADiS}
		\begin{equation} \label{toextendB}
			{\textbf{R}_{{\rm{s}}_{\rm{o}}}} = {{E}}[{{{\rm{\textbf{s}}}}_{\rm{o}}}(t){\textbf{s}_{\rm{o}}}^H(t)] = \left[ {\begin{array}{*{20}{c}}
					{{{\bf{R}}_{\rm{s}}}}&{{{{\bf{\hat R}}}_{\rm{s}}}}\\
					{{{{\bf{\hat R}}}^ * }_{\rm{s}}}&{{\bf{R}}_{\rm{s}}^ * }
			\end{array}} \right].
	\end{equation}}
	Vectorizing ${\textbf{R}_{{\rm{s}}_{\rm{o}}}}$ yields a $4N^2 \times 1$ vector defined as \cite{OCA, NADiS}
	\begin{equation} \label{vec}
		\begin{array}{l}
			{\textbf{z}_s}_{o} \!=\! \textcolor{black}{{\rm{vec}}}({{\bf{R}}_{{{\rm{s}}_{\rm{o}}}}}) \!=\! \textbf{K}\left[ \begin{array}{l}
				{\textbf{B}^ + }\left( {{\theta _1},{\theta _2}, \cdots ,{\theta _z}} \right){\textbf{r}}_{\rm{x}} + {p_n}\textcolor{black}{{\rm{vec}}}({I_N})\\
				{\textbf{C}^ - }\left( {{\theta _1},{\theta _2}, \cdots ,{\theta _z}} \right)\hat {\textbf{r}}_{\rm{x}}^ * \\
				{\textbf{C}^ + }\left( {{\theta _1},{\theta _2}, \cdots ,{\theta _z}} \right){{\hat {\textbf{r}}}_{\rm{x}}}\\
				{\textbf{B}^ {\textcolor{black}{-}} }\left( {{\theta _1},{\theta _2}, \cdots ,{\theta _z}} \right)\hat {\textbf{r}}_{\rm{x}}^ *  + {p_n}\textcolor{black}{{\rm{vec}}}({I_N})
			\end{array} \right],
		\end{array}  
	\end{equation}
	where \textcolor{black}{\( \textbf{K} = (\textbf{I}_2 \otimes \textbf{P} \otimes \textbf{I}_N)^{-1} \) is a \( 4N^2 \times 4N^2 \) permutation matrix, 
		with \( \textbf{P} = \textbf{D}_{mn} \otimes \textbf{D}_{mn}^T \in \mathbb{R}^{(2N \times 2N)} \), 
		and \( \textbf{D}_{mn} \) is a \( N \times 2 \) elementary matrix having unit element in the \((m,n)\)-th location 
		(for \( m = 1,2,\cdots,N \) and \( n = 1,2 \)).}
	
	In (\ref{vec}), ${\textbf{r}}_{\rm{x}} = \rm{diag}({\textbf{R}}_{\rm{x}})$ and ${{\hat {\textbf{r}}}_{\rm{x}}} =  \rm{diag}({{\hat {\textbf{R}}}_{\rm{x}}})$ signify the steering vectors of the covariance and pseudo covariance matrices, respectively, while $\textbf{B}^ +, \textbf{C}^ -, \textbf{C}^ +, \textbf{B}^ -$ are $N^2 \times Z$ matrices defined as \cite{OCA, NADiS, OSNA} 
	\begin{equation}
		\!\!\left[ {\begin{array}{*{20}{l}}
				\!\!{{{\textbf{B}}^ + }\left( {{\theta _1},{\theta _2}, \cdots ,{\theta _z}} \right)}\\ \!\!
				{{{\bf{C}}^ - }\left( {{\theta _1},{\theta _2}, \cdots ,{\theta _z}} \right)}\\\!
				\!{{{\bf{C}}^ + }\left( {{\theta _1},{\theta _2}, \cdots ,{\theta _z}} \right)}\\\!
				\!{{{\bf{B}}^ {\textcolor{black}{-}} }\left( {{\theta _1},{\theta _2}, \cdots ,{\theta _z}} \right)}
		\end{array}} \!\!\! \right] \!\! \! = \! \!\!\left[ {\begin{array}{*{20}{c}}
				\!\!\!\!\!\!\!\!\!{{\textbf{a}^ * }({\theta _1}) \!\otimes \! \textbf{a}({\theta _1}), \cdots\!, {\textbf{a}^ * }({\theta _Z}) \!\otimes \!\textbf{a}({\theta _Z})}\\\!\!\!\!
				{{\textbf{a}^ * }({\theta _1}) \!\otimes\! {\textbf{a}^ * }({\theta _1}), \cdots \!, {\textbf{a}^ * }({\theta _Z}) \!\otimes \!{\textbf{a}^ * }({\theta _Z})}\\\!\!\!\!\!\!\!\!\!\!\!\!\!\!\!
				{\textbf{a}({\theta _1}) \!\otimes\! \textbf{a}({\theta _1}),\cdots\!, \textbf{a}({\theta _Z}) \!\otimes \!\textbf{a}({\theta _Z})}\\\!\!\!\!\!\!\!\!\!\!
				{\textbf{a}({\theta _1}) \!\otimes \!{\textbf{a}^ * }({\theta _1}),\cdots\!, \textbf{a}({\theta _Z}) \!\otimes \!{\textbf{a}^ * }({\theta _Z})}
		\end{array}} \!\!\!\!\right].
	\end{equation}
	\textcolor{black}{Here, $\mathbf{B}^- = (\mathbf{B}^+)^*$ and $\mathbf{C}^- = (\mathbf{C}^+)^*$ indicates that these matrix pairs are conjugate‑symmetric.}
	It is easy to observe that the first $N^2$ and last $N^2$ entries of ${\textbf{z}_s}_{_o}$, $[({\textbf{a}^ * }({\theta _z}) \otimes {\textbf{a}}({\theta _z}))^T, ({\textbf{a}}({\theta _z}) \otimes  {\textbf{a}^ *}({\theta _z}))^T]^T$ emulates the virtual steering vector of ${x_z}(t)$ in the $\mathbb{DC}$, which can be shown as \cite{OCA, NADiS}
	\begin{equation} \label{DC}
		\mathbb{DC} = {\rm{ }}\left\{ {{m_u} - {m_v}\mid {m_u}, {m_v} \in \mathbb{P}} \right\}.
	\end{equation}
	Meanwhile, the second and third entries of ${\textbf{z}_s}_{o}$, $[({\textbf{a}^ * }({\theta _z}) \otimes {\textbf{a}^ * }({\theta _z}))^T, ({\textbf{a}}({\theta _z}) \otimes  {\textbf{a}}({\theta _z}))^T]^T$ represent the virtual steering vector of ${x_k}(t)$ in the $\mathbb{SC}$ defined by
	\begin{equation} \label{SC}
		\mathbb{SC}  = \pm {\rm{ }}\left\{ {{m_u} + {m_v}\mid {m_u}, {m_v} \in \mathbb{P}} \right\}.
	\end{equation}
	Based on (\ref{DC}) and (\ref{SC}), we analyze that the vector ${\textbf{z}_s}_{o}$ can be viewed as the signal received by the $\mathbb{SDC}$, defined as:
	\begin{equation} \label{RCS}
		\mathbb{SDC} \! = \!\left\{ {{m_u} \!- \!{m_v}, \!-{m_u} \!- \!{m_v},  {m_u} \!+\! {m_v}| {m_u}, {m_v} \in \mathbb{P} } \right\}.
	\end{equation}
	
	Most of the existing DOA estimators exploit the consecutive part of the $\mathbb{SDC}$ to estimate the DOAs of NCS \cite{NA, Coarray, NADiS, OCA}. \textcolor{black}{The $\mathbb{SDC}$ model is well-established and central to the DOA estimation of NCS \cite{NADiS}, \cite{GSNA,NSANCS,OSNA,SANC,ZRSA,STNA,RSNA,Tdis,SCPA,EnCPA,OCA}. 
		For a fair performance comparison, all array designs in this manuscript, including \cite{NA},\cite{MISC},\cite{Pdis}, are evaluated based on $\mathbb{SDC}$.}
	\textcolor{black}{\begin{definition}(Difference Set): Consider $\mathbb{P_1} = \{ {m_{\textcolor{black}{u}}}|\textcolor{black}{u} = 1,2, \cdots ,M\}$, $\mathbb{P_2} = \{ {m_{\textcolor{black}{v}}}|\textcolor{black}{v} = 1,2, \cdots ,N\}$ are the two locations sets, then the difference set between them can be defined as  $\mathbb{P_2} - \mathbb{P_1} = \{ {m_{\textcolor{black}{v}}} - {m_{\textcolor{black}{u}}}| {m_{\textcolor{black}{v}}} \in \mathbb{P_2}, {m_{\textcolor{black}{u}}} \in \mathbb{P_1}\}.$ 
		\end{definition}
		\begin{definition}(Sum Set): Consider two location sets as, $\mathbb{P_1} = \{ {m_{\textcolor{black}{u}}}|\textcolor{black}{u} = 1,2, \cdots ,M\}$, $\mathbb{P_2} = \{ {m_{\textcolor{black}{v}}}|\textcolor{black}{v} = 1,2, \cdots ,N\}$, then their sum set can be defined as  $\mathbb{P_2} + \mathbb{P_1} = \{ {m_{\textcolor{black}{v}}} + {m_{\textcolor{black}{u}}}| {m_{\textcolor{black}{v}}} \in \mathbb{P_2}, {m_{\textcolor{black}{u}}} \in \mathbb{P_1}\}.$ 
		\end{definition}
		\begin{definition} (Continuous Virtual Aperture): \textcolor{black}{Based on (\ref{RCS}), the virtual array aperture can be quantified as $max(\mathbb{SDC}) - min(\mathbb{SDC})$ \cite{New}}. Let \textit{m} be the largest integer such that $\left\{ {0, \pm 1, \cdots, \pm m} \right\} \subseteq \mathbb{SDC}$. Then, the continuous virtual segment is defined as $\mathbb{U} = \left\{ {0, \pm 1, \cdots, \pm m} \right\}$, which has no missing lags between $-m$ and $m$. \textcolor{black}{The CVA, representing the spatial extent of this segment, is calculated for a unit inter-element spacing $d$ as:
				\begin{equation}
					CVA = max(\mathbb{U}) - min(\mathbb{U}) = 2m.
			\end{equation}}.
		\end{definition}
		\vspace{-2em}\begin{definition}
			(Uniform Degrees of Freedom): The uDOFs are the number of distinct lags in $\mathbb{U}$, \textcolor{black}{given by:
				\begin{equation}
					uDOFs = |\mathbb{U}| = 2m + 1.
			\end{equation}}
		\end{definition} \vspace{-2.25em}
		\begin{definition}
			(Spatial Efficiency): The spatial efficiency (S.E) is the ratio of the length of consecutive segment in the positive  $\mathbb{SDC}$ to the total virtual length of the positive $\mathbb{SDC}$ \cite{ktimes}.
		\end{definition}
		\begin{definition}
			Denote ${\mathbb{F}}$ as the filled difference or sum co-array with no missing lags. For instance, the filled $\mathbb{DC}$ can be defined as ${\mathbb{F}} = \left\{ {n} \in {\mathbb{Z}}| {\min (\mathbb{DC}) \le n \le \max (\mathbb{DC})} \right\}.$ Then, the integer is a missing lag or a hole (h) in $\mathbb{DC}$ if $h \in \mathbb{F}$ but $h \notin \mathbb{DC}.$ While, the set of holes $(\mathbb{DH})$ is formulated by $\mathbb{DH} = \mathbb{F} \backslash \mathbb{DC} = \left\{h \in \mathbb{F}|~ h \notin \mathbb{DC}  \right\}$. Likewise, it applies to the case of $\mathbb{SC}$, for which the set of holes is defined by $\mathbb{SH} = \mathbb{F} \backslash \mathbb{SC} = \left\{h \in \mathbb{F}|~ h \notin \mathbb{SC}  \right\}$, where ${\mathbb{F}} = \left\{ {n} \in {\mathbb{Z}}| {\min (\mathbb{SC}) \le n \le \max (\mathbb{SC})} \right\}.$
	\end{definition}}  \vspace{-2em}
	\subsection{Mutual Coupling}
	The MC between the elements with smaller inter-element separation distorts the estimation performance. The output vector in the presence of MC is formulated by \cite{STNA}
	\begin{equation}
		\textbf{x}(t) = \textbf{C}\textbf{A}\textbf{s}(t) + \textbf{n}(t), ~~~~~~~~~~~~~ t = 1,2,\cdots, T
	\end{equation}
	where \textbf{C} signifies the $N \times N$ MC matrix, which for the coupling-free models is an identity matrix. \textbf{C} can be approximated by a \textit{B}-banded Toeplitz matrix, expressed by:
	\begin{equation} \label{MCM}
		{\left[ \textbf{C} \right]_{{m_u},{m_v}}} = \left\{ \begin{array}{l}
			{c_{\left| {{m_u} - {m_v}} \right|}}, ~~~~ {\rm{if}}\left| {{m_u} - {m_v}} \right| \le B,\\
			0, ~~~~~~~~~~~~~ {\rm{otherwise,}}
		\end{array} \right.
	\end{equation}
	where ${{m_u}, {m_v}} \in \mathbb{P}$, ${\left[ \textbf{C} \right]_{{m_u},{m_v}}}$ signifies the $({m_u},{m_v})$th entry of \textbf{C} and the coupling coefficients ${c_0},{c_1}, \cdots, {\rm{ }}{c_B}$ satisfy ${c_0} = 1 \textcolor{black}{\ge} |{c_1}| > |{c_2}| >  \cdots  > |{c_B}|$. The weight functions and coupling leakage are essential to quantify the MC effects.
	\begin{definition}
		The weight functions $w(f)$ of an array ($\mathbb{P}$) are the number of element pairs corresponding to the co-array index \textit{f}, which can be expressed as \cite{STNA,MISC}
		\begin{equation}
			w(f) = {\rm{ }}|\left\{ {\left( {{m_1},{m_2}} \right) \in {\mathbb{P}}^2|{m_u} - {m_v} = f} \right\}|, ~~~~ f \in {\mathbb{DC}}
		\end{equation}
	\end{definition}
	\begin{definition}
		The coupling leakage for an \textit{N}-element array can be characterized as the energy ratio \cite{MISC}
		\begin{equation}
			L_{C} = \frac{{{{\left\| {\textbf{C} - {\rm{diag}}(\textbf{C})} \right\|}_F}}}{{{{\left\| \textbf{C} \right\|}_F}}}.
		\end{equation}
		A Higher value of $L_{C}$ indicates stronger MC. 
	\end{definition}
	\vspace{-1.5em}\section{Proposed Augmented ULAs (AULAs) Design}
	\begin{figure}	[!t] \vspace{-0.5em}\hspace{-0.25em}
		\includegraphics[width = 0.5\textwidth]{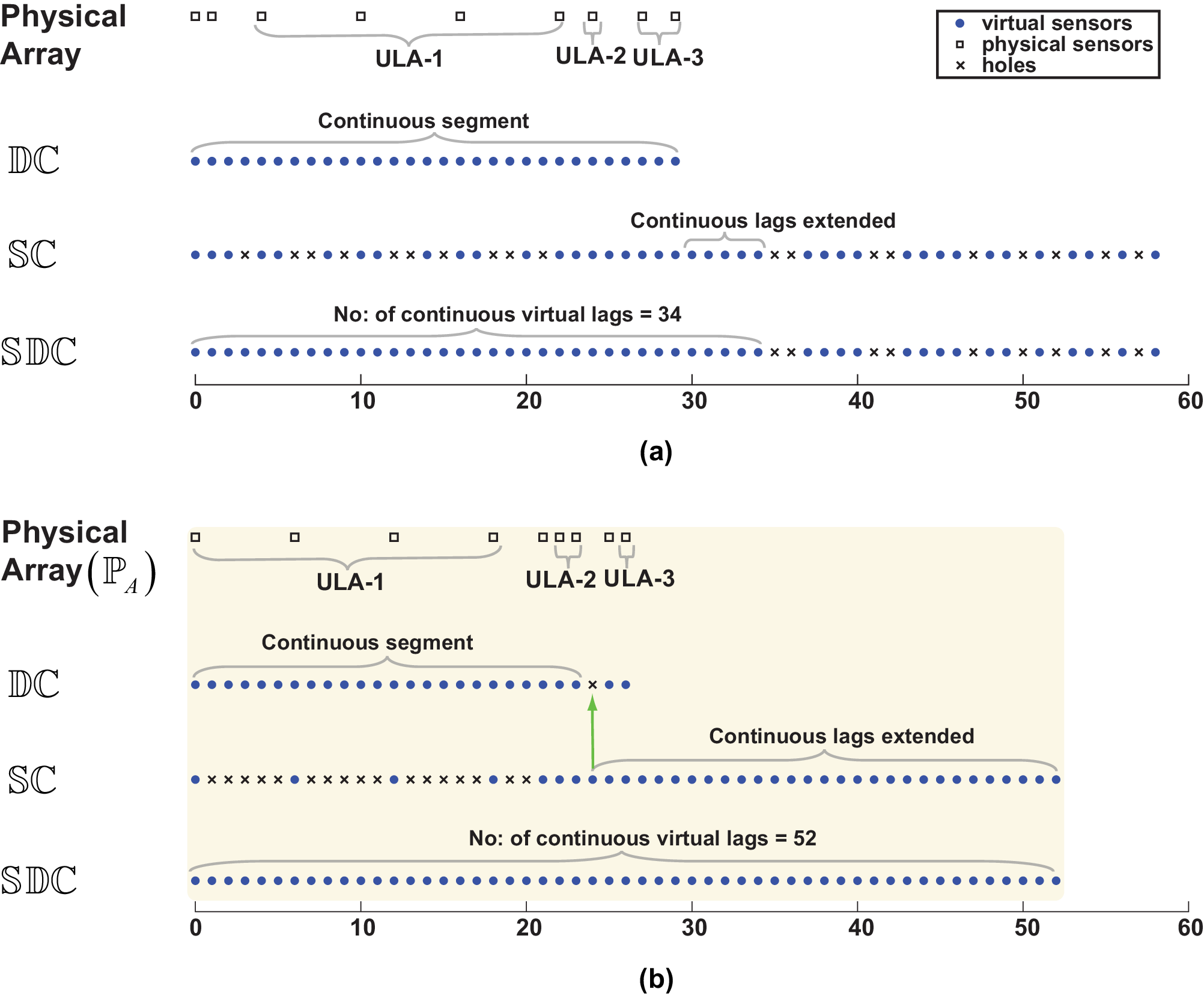} \vspace{-0.9em}
		\caption{\textcolor{black}{\textcolor{black}{Design progression from MISC to the AULAs for $N = 9$. (a) Physical and virtual arrays of MISC. (b) Physical and virtual arrays of AULAs.}} \label{MISCvsAULA}} \vspace{-1.5em}
	\end{figure}
	\subsection{Motivation}
	\textcolor{black}{Fig.\,\ref{MISCvsAULA} illustrates the design progression from MISC arrays to the proposed AULAs configuration.} \textcolor{black}{The MISC array contains three sparse ULAs and two separate elements. The inter-element spacing in two of the sparse ULAs is fixed to 2\textit{d}, and the maximum inter-element spacing of $Md$ is set for the remaining ULA. \textcolor{black}{While the MISC achieves a large hole‑free $\mathbb{DC}$ with this array structure, it is suboptimal for the $\mathbb{SDC}$.}}
	
	Fig.\,\ref{MISCvsAULA}(a) plots the physical and virtual (non‑negative) arrays (normalized by \textit{d}) of the MISC for $N = 9$. The MISC array produces a long hole‑free $(\mathbb{DC})$ up to 29, where the maximum inter‑element spacing is $M = 6$ \cite{MISC}. \textcolor{black}{However, for NCS, which involve the joint utilization of both $\mathbb{DC}$ and $\mathbb{SC}$ to form the $\mathbb{SDC}$, MISC exhibits structural shortcomings. Its $\mathbb{SC}$ contains irregular holes and contributes marginally to extending the continuous virtual lags beyond the $\mathbb{DC}$. Consequently, the $\mathbb{SDC}$ of MISC is continuous only up to 34. This not only restricts uDOFs but also wastes aperture resources, as the holes congest the central portion of $\mathbb{SDC}$.}
	
	
	\textcolor{black}{To overcome these limitations, we introduce the AULAs design. As shown in Fig.\,\ref{MISCvsAULA}(b), AULAs employ a novel configuration that integrates three ULAs and two separate elements to jointly optimize the $\mathbb{DC}$ and $\mathbb{SC}$. Here, two ULAs have an inter‑element spacing $d$, the third ULA uses spacing $Md$, and the two separate elements are uniquely positioned. \textcolor{black}{AULAs systematically avoid holes inherent to the $\mathbb{SDC}$ of MISC array.}}
	
	Although the maximum inter-element spacing of AULAs is also $M = 6$ for $N = 9$, and its continuous $\mathbb{DC}$ produces lags up to 23, \textcolor{black}{AULAs achieve a critical advantage in Fig.\,\ref{MISCvsAULA}: a virtual lag in its $\mathbb{SC}$ fills the hole in $\mathbb{DC}$ (indicated by a green arrow)}. Beyond this splicing of holes and virtual lags, the remaining part of $\mathbb{SC}$ is also continuous up to 52. \textcolor{black}{In contrast, the $\mathbb{SC}$ of MISC remains fragmented.} As a result, the AULAs generate a hole-free $\mathbb{SDC}$, with virtual lags up to 52.
	\vspace{-1.5em}\subsection{Design Rules And Array Structure of AULAs}
	The total number of elements of AULAs is obtained by $N$ = $N_{1} + N_{2} + N_{3} + 2$, where $N_{1}, N_{2}, N_{3}$ respectively signify the number of elements in the first, second, and third ULA. Fig.\,\ref{AULA} illustrates the AULAs configuration. We use the associated inter-element spacing set $\mathbb{A}_{\rm{AULAs}}$ and maximum inter-element spacing $(M)$ to determine the element locations of AULAs. Specifically, $M$ and $\mathbb{A}_{\rm{AULAs}}$ can be defined as
	\begin{equation} \label{PV}
		\begin{array}{l}
			M = 2\left\lceil {N/4} \right\rceil,  ~~~ \textcolor{black}{(N \ge 9)} \\
		\end{array} 
	\end{equation} 
	\begin{equation} \label{AULAs-A}
		\mathbb{A}_{\rm{AULAs}} = \{ \underbrace{M, M, \cdots, M}_{	{N_1} - 1}, M/2, \underbrace{1, 1, \cdots, 1}_{	{N_2}}, 2, \underbrace{1, 1, \cdots, 1}_{	{N_3}}\},
	\end{equation}
	where
	\begin{equation} \label{PV1}
		\left\{ \begin{array}{l}
			{N_1} = N - M{\rm{ + 1}},\\
			{N_2} = M/2 - 1{\rm{,}}\\
			{N_3} = M/2 - 2{\rm{.}}
		\end{array} \right.
	\end{equation}
	\textcolor{black}{Note that, since $N_3 = M/2 - 2$, ensuring that all three ULAs contain at least one element requires $M \ge 6$, which in turn implies that $N \ge 9$. This guarantees that the AULAs structure forms three ULAs with two separate sensors.}
	
	\begin{figure} \vspace{-1.75em}\centering
		\includegraphics[width = 0.47\textwidth]{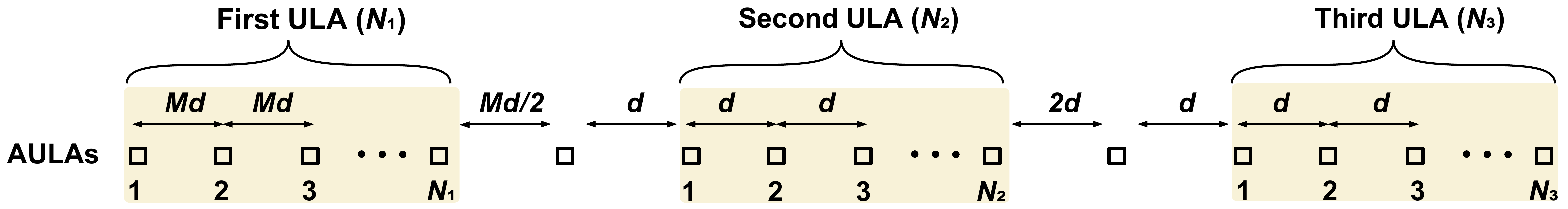} \vspace{-1.1em}
		\caption{Augmented ULAs configuration.} \label{AULA} \vspace{-1.5em}
	\end{figure}
	\begin{figure}	[b]  \centering
		\includegraphics[height=3.55cm,width=7.5cm]{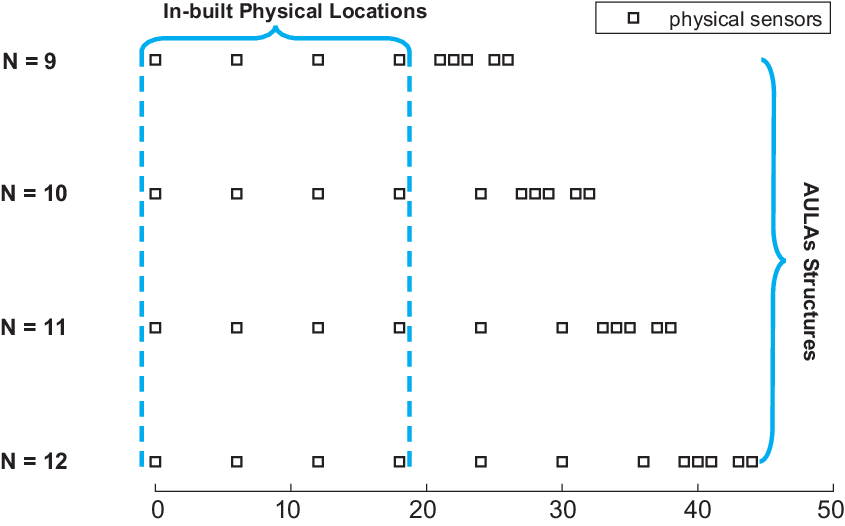}
		\vspace{-0.8em}	\caption{\textcolor{black}{Physical sensor locations for AULAs $(N \in \left\langle {9,12} \right\rangle)$, illustrating the in-built location property.} \label{inbuilt}}
	\end{figure}
	Following (\ref{AULAs-A}), the array configuration $({\mathbb{P}_{\rm{A}}})$ of AULAs can be mathematically defined as
	\begin{equation} \label{AULAA}
		{\mathbb{P}_{\rm{A}}} = {\mathbb{P}_{\rm{a_1}}} \cup {\mathbb{P}_{\rm{a_2}}} \cup {\mathbb{P}_{\rm{a_3}}}, 
	\end{equation}
	where 
	\begin{equation} \label{AULAA1}
		{\mathbb{P}_{\rm{a_1}}} = \{\left\langle {0,({N_1} - 1)} \right\rangle (M) \},
	\end{equation}
	\begin{equation} \label{AULAA2}
		{\mathbb{P}_{\rm{a_2}}} = \{({N_1}M - M/2 - 1) + \left\langle {1, M/2} \right\rangle\},
	\end{equation}
	\begin{equation} \label{AULAA3}
		{\mathbb{P}_{\rm{a_3}}} = \{({N_1}M) + \left\langle {1, M/2 - 1} \right\rangle\}.
	\end{equation}
	\begin{lemma}
		The AULAs enjoy the following properties: \\
		(a) The $\mathbb{DC}$ is continuous in $ \left\langle {0, {l_{a_1}}} \right\rangle$, except a single hole occur at ${N_1}M$, where ${l_{a_1}} = {N_1}M + M/2 - 1.$ \\
		(b) The $\mathbb{SC}$ is continuous in  $ \left\langle {{l_{a_2}}, {l_{a_3}}} \right\rangle$, where
		\textcolor{black}{\begin{equation}
				\left\{ {\begin{array}{*{20}{l}}
						{l_{a_2}} = {N_1}M - M/2,\\
						{l_{{a_3}}} = {2{N_1}M + M - 2.}
				\end{array}} \right.
		\end{equation}} \\
		(c) The uDOFs for AULAs $({D_A})$ is obtained by
		\textcolor{black}{	\begin{equation}
				{D_A} = {\begin{array}{l}
						4{N_1}M + 2M - 3.
				\end{array}} 
		\end{equation}}
	\end{lemma}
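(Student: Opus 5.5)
The plan is to work directly with the explicit description of $\mathbb{P}_{\rm A}$ and split it into two pieces: the sparse ULA $\mathbb{P}_{\rm a_1}=\{kM\mid k\in\langle 0,N_1-1\rangle\}$, and a \emph{block} $\mathbb{B}=\mathbb{P}_{\rm a_2}\cup\mathbb{P}_{\rm a_3}$. From (\ref{AULAA2})--(\ref{AULAA3}), $\mathbb{P}_{\rm a_2}=\langle N_1M-M/2,\,N_1M-1\rangle$ and $\mathbb{P}_{\rm a_3}=\langle N_1M+1,\,N_1M+M/2-1\rangle$. Hence $\mathbb{B}$ is the integer interval $\langle N_1M-M/2,\,l_{a_1}\rangle$ with the single point $N_1M$ removed. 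This block has width $M-1<M$, so it contains exactly one residue class modulo $M$ per position, and the residue $0 \pmod M$ occurs only at the removed point $N_1M$. The largest sensor is $l_{a_1}=N_1M+M/2-1$.

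\textbf{(a) Difference co-array.} I will cover $\langle 0,l_{a_1}\rangle$ in three pieces.
\begin{itemize}
\item[(i)] Lags $\langle 0,M/2\rangle$ come from differences inside $\mathbb{B}$. The segment $\mathbb{P}_{\rm a_2}$ alone gives $\langle 0,M/2-1\rangle$, and $(N_1M+1)-(N_1M-M/2+1)=M/2$.
\item[(ii)] For each $k$, the translate $\mathbb{B}-kM$ covers $\langle N_1M-kM-M/2,\,N_1M-kM+M/2-1\rangle\setminus\{(N_1-k)M\}$. Taking the union over $k\in\langle 0,N_1-1\rangle$ gives $\langle M/2,l_{a_1}\rangle$, except possibly the multiples $jM$ with $j\in\langle 1,N_1\rangle$.
\item[(iii)] The multiples $jM$ with $j\le N_1-1$ are differences inside $\mathbb{P}_{\rm a_1}$.
\end{itemize}
It remains to show that $N_1M$ is genuinely absent. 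A difference of two sensors in $\mathbb{P}_{\rm a_1}$ is at most $(N_1-1)M$. A difference of two sensors in $\mathbb{B}$ is at most $M-1$. A difference $b-kM$ with $b\in\mathbb{B}$ equals $N_1M$ only if $b\equiv 0\pmod M$, i.e.\ $b=N_1M$, which is excluded. So $N_1M$ is the unique hole.

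\textbf{(b) Sum co-array.} Again three pieces.
\begin{itemize}
\item[(i)] The translates $kM+\mathbb{B}$ for $k\in\langle 0,N_1-1\rangle$ tile $\langle l_{a_2},\,2N_1M-M/2-1\rangle$, except for the points $N_1M+kM$.
\item[(ii)] The points $N_1M+kM$ with $k\le N_1-2$ lie in $\mathbb{P}_{\rm a_1}+\mathbb{P}_{\rm a_1}$, since $N_1+k\le 2N_1-2$.
\item[(iii)] I will show that $\mathbb{B}+\mathbb{B}=\langle 2N_1M-M,\,l_{a_3}\rangle$ with no hole. This range contains the remaining point $2N_1M-M$ (the case $k=N_1-1$), and it overlaps the range of (i) because $M/2\ge 1$.
\end{itemize}
For the hole-freeness in (iii), $\mathbb{B}$ is a union of two contiguous runs separated by the single gap at $N_1M$. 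The sums $\mathbb{P}_{\rm a_2}+\mathbb{P}_{\rm a_2}$, $\mathbb{P}_{\rm a_2}+\mathbb{P}_{\rm a_3}$ and $\mathbb{P}_{\rm a_3}+\mathbb{P}_{\rm a_3}$ are contiguous intervals that overlap one another. I will check the junction lags explicitly: for instance $2N_1M=(N_1M-1)+(N_1M+1)$, and near the top $l_{a_3}-1=(N_1M+M/2-1)+(N_1M+M/2-2)$. The hypothesis $M\ge 6$ guarantees that $\mathbb{P}_{\rm a_3}$ is nonempty, so these sums exist. Finally, $l_{a_3}=2l_{a_1}$ is the largest possible sum.

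\textbf{(c) Uniform degrees of freedom.} Since $l_{a_2}=N_1M-M/2<N_1M<l_{a_1}$, the sum co-array segment from (b) fills the single hole found in (a). By the symmetry of $\mathbb{SDC}$, this gives $\langle -l_{a_3},l_{a_3}\rangle\subseteq\mathbb{SDC}$. The lag $l_{a_3}+1$ is not in $\mathbb{SDC}$: every difference is at most $l_{a_1}<l_{a_3}$, and every sum is at most $2l_{a_1}=l_{a_3}$. Therefore $m=l_{a_3}$ and $D_A=2l_{a_3}+1=4N_1M+2M-3$.

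The main obstacle is the bookkeeping in (b). One must confirm that the multiples-of-$M$ gaps in the translates $kM+\mathbb{B}$ are all patched, by $\mathbb{P}_{\rm a_1}+\mathbb{P}_{\rm a_1}$ and $\mathbb{B}+\mathbb{B}$, and that $\mathbb{B}+\mathbb{B}$ has no internal hole. Small-$M$ edge cases need care here, so I would first verify the whole argument numerically for $N=9$, i.e.\ $M=6$, $N_1=4$, giving $\mathbb{SDC}$ continuous up to 52 as in Fig.\,\ref{MISCvsAULA}.
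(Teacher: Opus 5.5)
Your proof is correct and follows essentially the same route as Appendix A. Merging $\mathbb{P}_{\rm{a_2}}\cup\mathbb{P}_{\rm{a_3}}$ into the block $\mathbb{B}$ only repackages the paper's pairwise sets $\mathbb{P}_{\rm{a_2}}-\mathbb{P}_{\rm{a_1}}$, $\mathbb{P}_{\rm{a_3}}-\mathbb{P}_{\rm{a_1}}$, $\mathbb{P}_{\rm{a_1}}+\mathbb{P}_{\rm{a_2}}$ and $\mathbb{P}_{\rm{a_1}}+\mathbb{P}_{\rm{a_3}}$, with multiples of $M$ patched by $\mathbb{P}_{\rm{a_1}}\pm\mathbb{P}_{\rm{a_1}}$ and the top range supplied by the three sums inside $\mathbb{B}$. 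You are also slightly more complete than the paper, since you check that $N_1M\notin\mathbb{DC}$ and that $l_{a_3}+1\notin\mathbb{SDC}$, which the paper leaves implicit (minor nit: for $M=6$, $\mathbb{P}_{\rm{a_2}}+\mathbb{P}_{\rm{a_3}}$ and $\mathbb{P}_{\rm{a_3}}+\mathbb{P}_{\rm{a_3}}$ abut rather than overlap, which still suffices).
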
 \vspace{-2em}
	\begin{proof}
		The proof is provided in Appendix A.
	\end{proof} \vspace{-1.5em}
	\subsection{In-built Physical Locations for Scaling}
	Another key merit of the AULAs design is its in-built physical locations, enabling easier scaling to a larger aperture. The AULAs maintain $M - 2$ identical locations for any array number in an interval, whose first and last array number reside in $N\%4 = 1$ and $N\%4 = 0$, respectively. \textcolor{black}{For example, when $M = 6$, the interval corresponds to $N \in \langle 9, 12 \rangle$.} 
	
	\textcolor{black}{Fig.\,\ref{inbuilt} illustrates the in-built physical location property of AULAs for $N \in \langle 9, 12 \rangle$, where $M = 6$. As shown, all the structures have $M - 2 = 4$ identical locations. Thus, a 9-element AULAs design inherently provides four sensor locations for the other configurations. Consequently, extending it to a 10-element structure requires configuring only the remaining $N - M + 2 = 6$ locations, reducing the redesign effort by 40\%. Notably, when starting from a larger base array within the same interval, the number of identical locations increases. For example, a 10-element AULAs structure yields five of the sensor locations required for an 11-element AULAs, reducing the redesign effort by approximately 45\%. Likewise, an 11-element AULAs provide six sensor locations for a 12-element AULAs, reducing the redesign workload by 50\%. Since the value of $M$ increases with $N$, the scalability advantage become more pronounced for larger arrays. Therefore, the AULAs design approach facilitates scaling to a larger aperture with minimal restructuring of the physical array.} \vspace{-1em}
	\subsection{Weight Functions}
	The weight functions $w(f)$ are critical for quantifying the MC effects. Specially, the first three weight functions, $w(1)$, $w(2)$, and $w(3)$ dominate the coupling behavior, with $w(1)$ exerting the highest impact \cite{STNA}.
	
	\begin{figure}	[t] \vspace{-0.5em} \hspace{-0.25em} 
		\includegraphics[width = 0.5\textwidth]{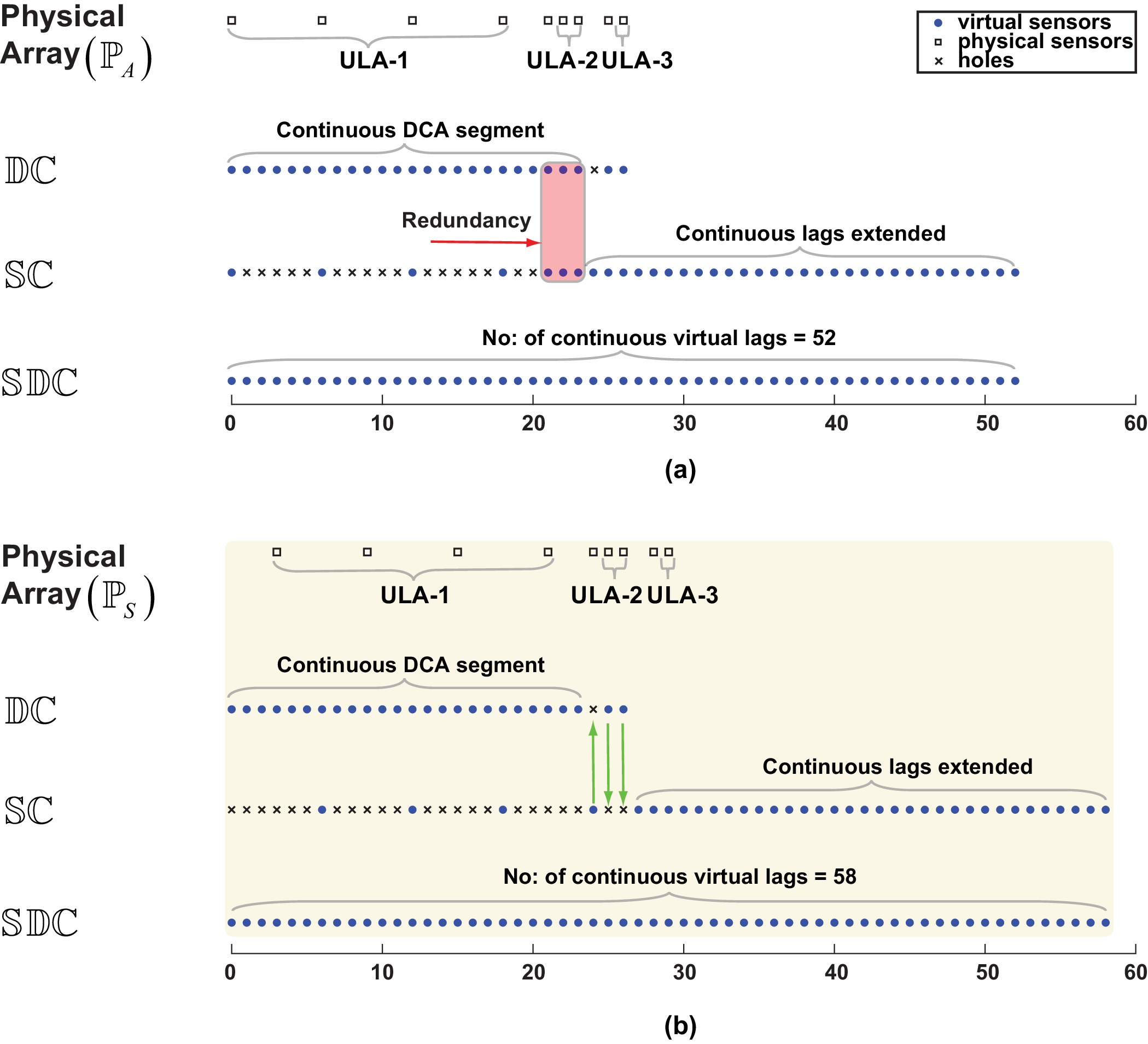}
		\vspace{-1.5em}	\caption{\textcolor{black}{Design progression from AULAs to SAULAs configuration using $N = 9$ as an example. (a) Physical and virtual arrays of AULAs. (b) Physical and virtual arrays of SAULAs.} \label{Augtoshifted}}
	\end{figure}
	According to the $\mathbb{A}_{\rm{AULAs}}$ in (\ref{AULAs-A}) and the definition of $w(f)$, the number of element pairs corresponding to 1, 2, 3 in ${\mathbb{A}_{\rm{AULAs}}}$ respectively yields $w(1), w(2), w(3)$. Therefore, the first three weight functions for AULAs can be formulated as
	\begin{align} \label{WAULAs-A}
		\scalemath{0.9}{\hspace{-0.5em} w(1) \!= \!2\left\lceil {N/4} \right\rceil \! - \!3,w(2) \!=\! 2\left\lceil {N/4} \right\rceil \! - \!4,w(3) \!=\! \left\{\! \!{\!\!\begin{array}{*{20}{l}}
					{2\left\lceil {N/4} \right\rceil \! - \!5,}\!\!\!\!&{{\rm{if}}\;M \!> \!6,}\\
					{\left\lceil {N/4} \right\rceil ,}\!\!\!\!&{{\rm{if}}\;M \!= \!6.}
				\end{array}\!\!} \right.}
	\end{align}
	
	\subsection{Shifted Augmented ULAs (SAULAs) With Increased DOFs}
	The AULAs generate a long hole‑free $\mathbb{SDC}$. However, it is noted that \textcolor{black}{some virtual lags in the $\mathbb{DC}$ and $\mathbb{SC}$ overlap at identical locations}, resulting in redundant information. This is illustrated in Fig.~\ref{Augtoshifted}(a), which displays the physical and virtual (non-negative) arrays of the AULAs design. The red-shaded rectangle captures these overlapping virtual lags. Hence, by reducing this redundancy, we can further increase the DOFs.
	\begin{figure} [t]	\centering
		\includegraphics[width = 0.5\textwidth]{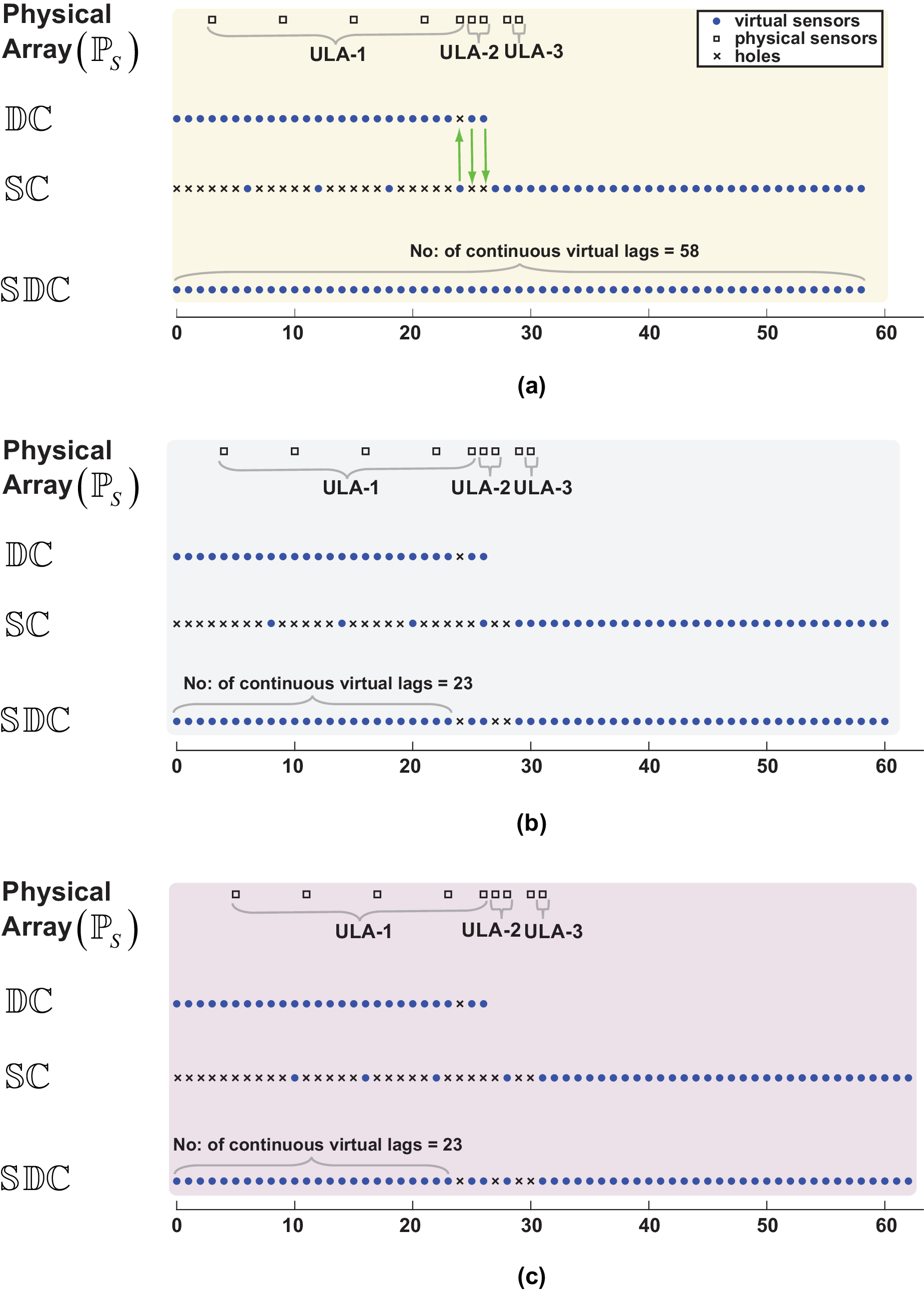} \vspace{-1.05em}
		\caption{\textcolor{black}{Step-wise analysis of arbitrary shifts applied to SAULAs for $N = 9$. (a) Physical and virtual arrays of SAULAs with $M/2$ shift. (b) Physical and virtual arrays of SAULAs with $M/2 + 1$ shift. (c) Physical and virtual arrays of SAULAs with $M/2 + 2$ shift.} \label{SAULAshift}} \vspace{-1.35em}
	\end{figure}
	
	To achieve this, we present the shifted AULAs (SAULAs) design by applying an optimal shift of $M/2$ to the AULAs structure. \textcolor{black}{This shift is implemented during the array design phase instead of an arbitrary post-design adjustment, resulting in a new, physically realizable array geometry that further optimizes the $\mathbb{SDC}$. As a result, while the $\mathbb{DC}$ remains identical to that of the AULAs, the lags of $\mathbb{SC}$ are now shifted by $M$, thereby minimizing the redundancy between $\mathbb{DC}$ and $\mathbb{SC}$.}  \textcolor{black}{Fig.~\ref{Augtoshifted}(b) shows the physical and virtual (non-negative) arrays of the SAULAs. Although both designs occupy the same total physical space, SAULAs achieves more DOFs and a larger CVA. \textcolor{black}{This is achieved through a two-way splicing strategy: the hole in $\mathbb{DC}$ at ${N_1}M = 24$ is filled by a virtual lag in the $\mathbb{SC}$, while the $\mathbb{DC}$ provides virtual lags that fill the subsequent holes in the $\mathbb{SC}$ at ${N_1}M + 1 = 25$ and ${N_1}M + 2 = 26$, as indicated by green arrows.} \textcolor{black}{This raises the number of consecutive virtual lags to 58, which is $M$ more than the AULAs design provides.}}
	
	\textcolor{black}{Furthermore, the applied shift $(M/2)$ is uniquely optimal, ensuring the $\mathbb{SDC}$ of SAULAs remains hole-free. Any increase beyond $M/2$ breaks this precise alignment and introduces holes. For instance, increasing the shift to $M/2 + 1$ prevents the hole in the $\mathbb{DC}$ at ${N_1}M$ from being filled by the $\mathbb{SC}$, causing $\mathbb{SDC}$ to no longer be a hole-free ULA. This is illustrated in Fig.\,\ref{SAULAshift}, which shows the physical and virtual (non-negative) arrays with different applied shifts, each resulting in distinct physical array, where $N\!=\!9, M=6,$ and $N_{1}=4$. When the shift is increased to $M/2 + 1 = 4$, a hole occurs at ${N_1}M = 24$, reducing the consecutive virtual lags by nearly 60\% compared to the applied shift of $M/2$. This is because, although the $\mathbb{DC}$ remains unchanged, the $\mathbb{SC}$ lags are shifted relative to the $\mathbb{DC}$ hole. Similarly, a further increase to $M/2 + 2 = 5$ lengthens the $\mathbb{SDC}$ but introduces more holes. Thus, $M/2$ is not an arbitrary shift but an optimal design parameter for SAULAs.}
	
	Fig.~\ref{SAULA} shows the SAULAs model. Since the inter-element spacings are identical for AULAs and SAULAs, their associated inter-element spacing set remain the same. Mathematically, the array configuration of SAULAs can be defined as
	\begin{equation} \label{SAULAA} 
		{\mathbb{P}_{\rm{S}}} = {\mathbb{P}_{\rm{s_1}}} \cup {\mathbb{P}_{\rm{s_2}}} \cup {\mathbb{P}_{\rm{s_3}}} ,
	\end{equation}
	where 
	\begin{equation} \label{SAULAA1} 
		{\mathbb{P}_{\rm{s_1}}} = \{ (M/2) + \left\langle {0,({N_1} - 1)} \right\rangle (M)\},
	\end{equation}
	\begin{equation} \label{SAULAA2} 
		{\mathbb{P}_{\rm{s_2}}} = \{({N_1}M - 1) + \left\langle {1, M/2} \right\rangle\},
	\end{equation}
	\begin{equation} \label{SAULAA3} 
		{\mathbb{P}_{\rm{s_3}}} = \{({N_1}M + M/2) + \left\langle {1, M/2 - 1} \right\rangle\},
	\end{equation}
	where ${N_1}$ and $M$ take the same values as in (\ref{PV}) and (\ref{PV1}).
	\begin{figure} [!t] \vspace{-1.5em}	\hspace{-0.5em}
		\includegraphics[width = 0.5\textwidth]{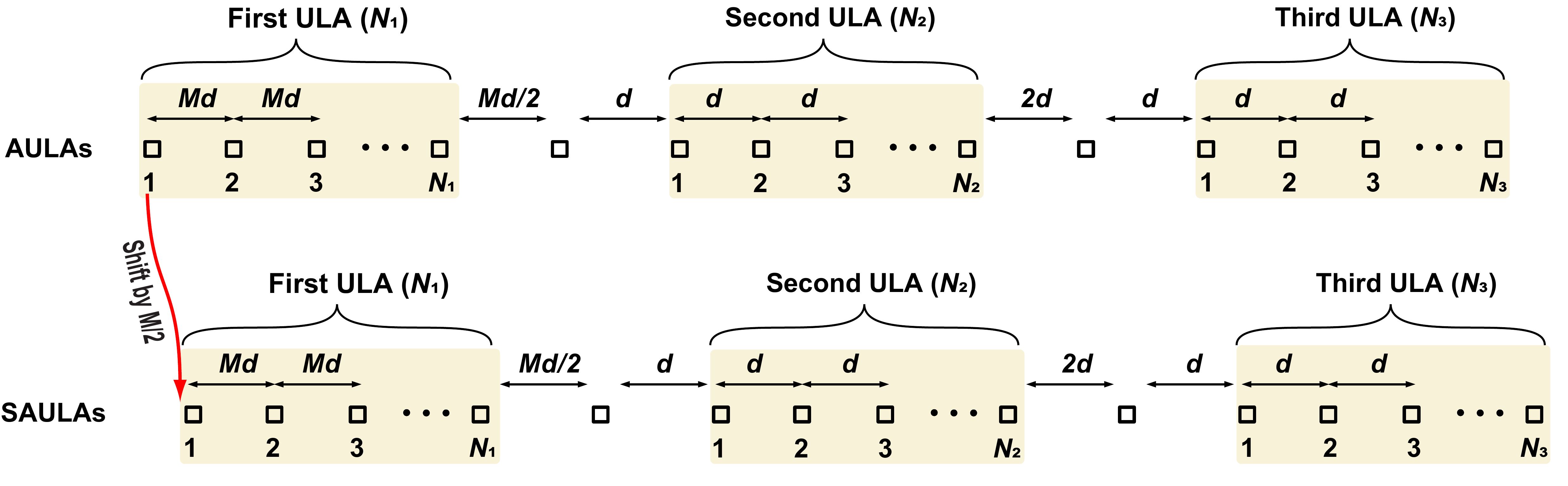} \vspace{-2.25em}
		\caption{Shifted AULAs configuration based on AULAs.} \label{SAULA}\vspace{-1em}
	\end{figure}
	\vspace{-1em}\subsection{\textcolor{black}{Comparison With Designs  Based on Shifting Technique}}
	
	\textcolor{black}{The OCA-SDCA \cite{OCA} reconfigures the CPA structure \cite{CPA} to better exploit the $\mathbb{SDC}$ model, resulting in increased uDOFs. Similar to CPA, OCA-SDCA also comprises two coprime subarrays containing $Q - 1$ and $2P$ sensors, where $P$ and $Q$ are coprime design parameters. The OCA-SDCA removes the leftmost sensor from the first subarray of the CPA, which is analogous to shifting the first subarray by $P$ and then removing its rightmost sensor. Meanwhile, the OCA-SDCA shifts second subarray to the right by $LQ$, where $0 \le L \le \lfloor \frac{{P + 1}}{2} \rfloor.$ Fig.\,\ref{OCAvsZRSAvsSAULAmii}(b) shows the synthesis of OCA-SDCA from the CPA structure (shown in Fig.\,\ref{OCAvsZRSAvsSAULAmii}(a)) for $N = 10$ with $P = 3$ and $Q = 5$, where the physical and virtual (non-negative) arrays are plotted. Clearly, the OCA-SDCA overlaps $\mathbb{SC}$ and $\mathbb{DC}$ more effectively than the CPA, yielding more continuous virtual lags.}
		\begin{figure} [t] \hspace{-0.25em}
		\includegraphics[height=11.7cm,width=8cm]{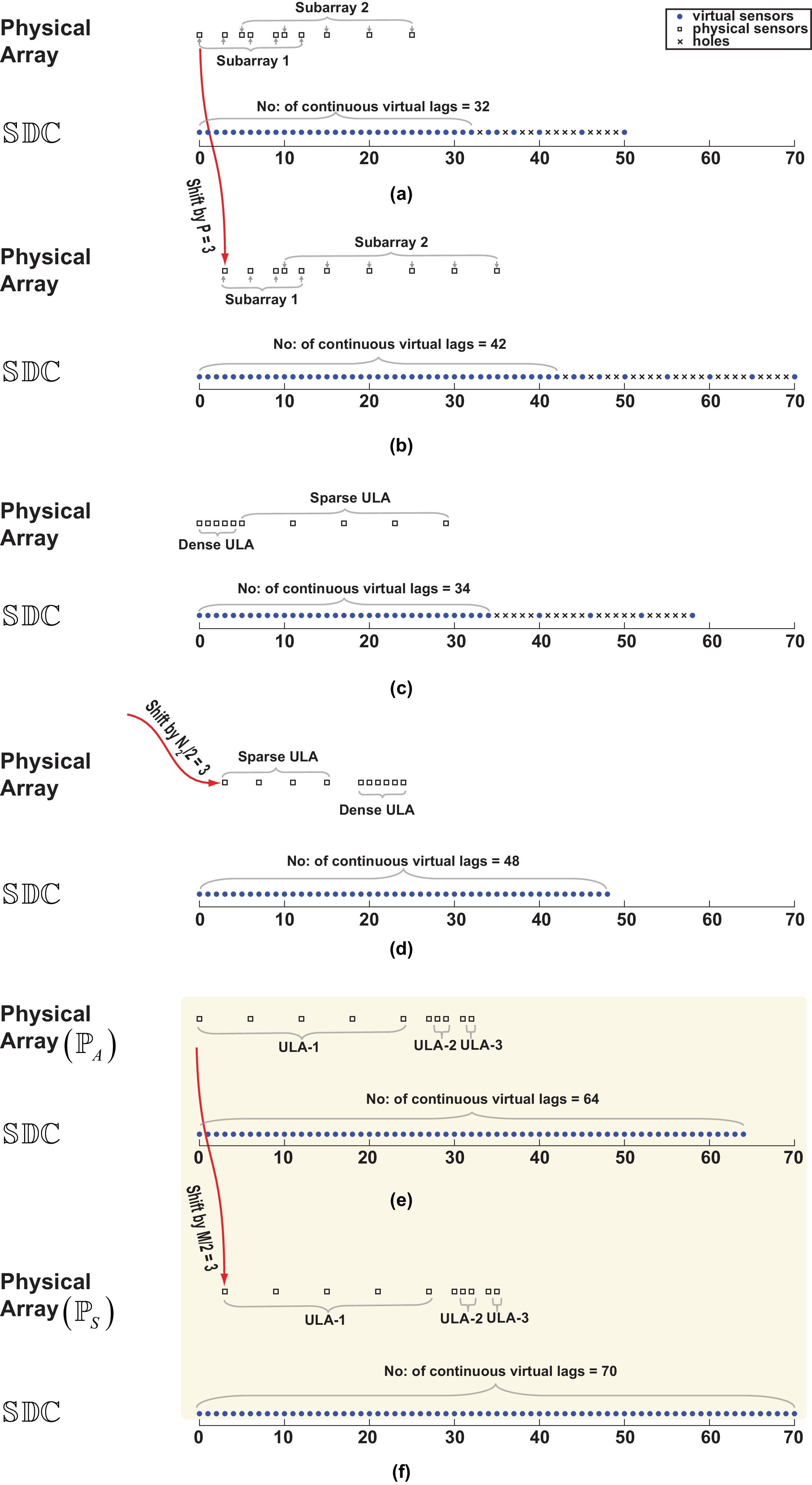} \vspace{-1.5em}
		\caption{\textcolor{black}{Evolution and comparison of different 10-element sparse array designs based on shifting technique. (a) CPA. (b) OCA-SDCA. (c) NA. (d) ZRSA. (e) AULAs. (f) SAULAs.} \label{OCAvsZRSAvsSAULAmii}} 
	\end{figure}
	
	\textcolor{black}{Likewise, ZRSA \cite{ZRSA} can be viewed as an evolution of the NA \cite{NA}. ZRSA strategically arranges the sparse and dense ULAs, with the sensor locations are shifted by ${N_2}/2$ to reduce the redundancy between $\mathbb{SC}$ and $\mathbb{DC}$. Here, ${N_2}$ is the number of sensors in the dense ULA of ZRSA. Fig.\,\ref{OCAvsZRSAvsSAULAmii}(d) shows the design evolution of ZRSA. Although the NA also comprises dense and sparse ULAs, it suffers from multiple holes. In contrast, the ZRSA produces a longer and hole-free $\mathbb{SDC}$.}
	
	\textcolor{black}{Fig.\,\ref{OCAvsZRSAvsSAULAmii}(f) illustrates the evolution of SAULAs from AULAs (shown in Fig.\,\ref{OCAvsZRSAvsSAULAmii}(e)) design. Unlike OCA-SDCA, SAULAs feature a hole-free $\mathbb{SDC}$, with the continuity exceeding that of the ZRSA. These improvements stem from the novel AULAs approach, enabling a more efficient splicing of $\mathbb{SC}$ and $\mathbb{DC}$ lags than ZRSA. The SAULAs further lengthen the $\mathbb{SDC}$ while retaining the hole-free  co-array property. In particular, SAULAs achieve approximately 46\% and 67\% more continuous virtual lags than the ZRSA and OCA-SDCA, respectively.}
	\vspace{-0.75em}\begin{lemma}
		The following properties hold for the SAULAs: \\
		(a) The $\mathbb{DC}$ is continuous in $  \left\langle {0, {l_{s_1}}} \right\rangle$, except one hole on $ {N_1}M$, where ${l_{s_1}} = {N_1}M + M/2 - 1.$ \\
		(b) The $\mathbb{SC}$ is continuous in  $ \left\langle {{l_{s_2}}, {l_{s_3}}} \right\rangle$, where 
		\textcolor{black}{	\begin{equation}
				\left\{ {\begin{array}{l}
						{l_{s_2}} = {N_1}M + M/2, \\
						{l_{s_3}} =	2{N_1}M + 2M - 2.
				\end{array} } \right.
		\end{equation}} \\
		(c) The uDOFs of SAULAs $({D_S})$ is formulated by
		\textcolor{black}{	\begin{equation}
				{D_S} = {\begin{array}{l}
						4{N_1}M + 4M - 3.
				\end{array} } 
		\end{equation}}
	\end{lemma}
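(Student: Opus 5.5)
The plan is to reduce Lemma~2 to Lemma~1 by observing that SAULAs is a rigid translation of AULAs. Comparing (\ref{SAULAA1})--(\ref{SAULAA3}) with (\ref{AULAA1})--(\ref{AULAA3}), I would first check term by term that $\mathbb{P}_{\rm{s}_i} = \mathbb{P}_{\rm{a}_i} + M/2$ for $i=1,2,3$. The sparse ULA gives $M/2 + kM$. The first dense ULA gives $(N_1M - M/2 - 1) + M/2 = N_1M - 1$ as its offset. The second dense ULA gives $N_1M + M/2$ as its offset. Hence $\mathbb{P}_{\rm S} = \mathbb{P}_{\rm A} + M/2$.

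For part (a), differences are invariant under translation, so the $\mathbb{DC}$ of SAULAs equals that of AULAs. Part (a) then follows verbatim from Lemma~1(a), with $l_{s_1} = l_{a_1}$ and the single hole at $N_1M$.

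For part (b), every sum $m_u + m_v$ increases by exactly $M$, so the positive sum set of SAULAs is that of AULAs shifted by $M$. By Lemma~1(b), it is therefore continuous on $\langle l_{a_2}+M,\, l_{a_3}+M\rangle = \langle N_1M + M/2,\, 2N_1M + 2M - 2\rangle$.

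For part (c), I will show that $\mathbb{SDC} \supseteq \langle -(2N_1M+2M-2),\, 2N_1M+2M-2\rangle$ and that $\pm(2N_1M+2M-1)\notin\mathbb{SDC}$. The $\mathbb{SDC}$ is symmetric, since $\mathbb{DC} = -\mathbb{DC}$ and $\mathbb{SC}$ includes $\pm$ by (\ref{SC}). It therefore suffices to treat nonnegative lags.
\begin{itemize}
\item The interval $\langle 0, N_1M+M/2-1\rangle$ is covered by the $\mathbb{DC}$, except for the single lag $N_1M$.
\item The interval $\langle N_1M+M/2,\, 2N_1M+2M-2\rangle$ is covered by the shifted $\mathbb{SC}$. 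The two intervals abut exactly, so no gap remains between them.
\item The main obstacle is the $\mathbb{DC}$ hole at $N_1M$, since it lies \emph{below} the continuous $\mathbb{SC}$ segment. I plan to fill it explicitly with a sum from the sparse ULA. Taking $(M/2 + jM) + (M/2 + kM) = (j+k+1)M$ with $j+k = N_1-1$ and $0\le j,k\le N_1-1$ (for example $j=0$, $k=N_1-1$) gives $N_1M \in \mathbb{SC}$.
\end{itemize}

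For maximality, the largest sensor is $N_1M + M - 1$. The largest sum is therefore $2N_1M + 2M - 2$, and every difference is at most $N_1M + M - 1$, which is smaller. Hence $2N_1M+2M-1\notin\mathbb{SDC}$, and $m = 2N_1M + 2M - 2$. This gives $D_S = 2m+1 = 4N_1M + 4M - 3$.
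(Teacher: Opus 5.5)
Your proof is correct. It differs from the paper's Appendix B mainly in part (b).

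The paper uses translation invariance only for the difference co-array. For the sum co-array it recomputes everything directly: it lists $\mathbb{P}_{\rm{s_1}}+\mathbb{P}_{\rm{s_2}}$, identifies its holes, and fills them with $\mathbb{P}_{\rm{s_1}}+\mathbb{P}_{\rm{s_1}}$ and $\mathbb{P}_{\rm{s_1}}+\mathbb{P}_{\rm{s_3}}$. It then chains $\mathbb{P}_{\rm{s_2}}+\mathbb{P}_{\rm{s_2}}$, $\mathbb{P}_{\rm{s_2}}+\mathbb{P}_{\rm{s_3}}$ and $\mathbb{P}_{\rm{s_3}}+\mathbb{P}_{\rm{s_3}}$, which essentially repeats Appendix A with shifted indices.

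You instead verify $\mathbb{P}_{\rm S}=\mathbb{P}_{\rm A}+M/2$ exactly; this checks out term by term. Every sum therefore moves by $M$, and (b) follows from Lemma 1(b) in one line. Your route is shorter and shows structurally why the shift gains exactly $M$ lags on each side while leaving the $\mathbb{DC}$ unchanged. It also isolates the one genuinely new check. In AULAs, the $\mathbb{DC}$ hole at $N_1M$ sat inside the continuous $\mathbb{SC}$ segment starting at $N_1M-M/2$. After the shift it lies below the segment $\langle N_1M+M/2,\,2N_1M+2M-2\rangle$, so it must be covered by the isolated lag from $\mathbb{P}_{\rm{s_1}}+\mathbb{P}_{\rm{s_1}}$. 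You do this exactly as the paper does.

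The paper's recomputation has its own merit: it does not depend on the correctness of Appendix A's sum-set analysis, and it shows which subarray pairs generate which lags. You also add an explicit maximality argument: $\pm(2N_1M+2M-1)\notin\mathbb{SDC}$, because the largest sum is $2(N_1M+M-1)$ and every difference is smaller. The paper leaves this implicit, but it is needed for $D_S=4N_1M+4M-3$ to be an exact count rather than a lower bound.
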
 \vspace{-2em}
	\begin{proof}
		The proof is provided in Appendix B.
	\end{proof} 
	\begin{figure} [t]  \centering
		\includegraphics[width = 0.5\textwidth]{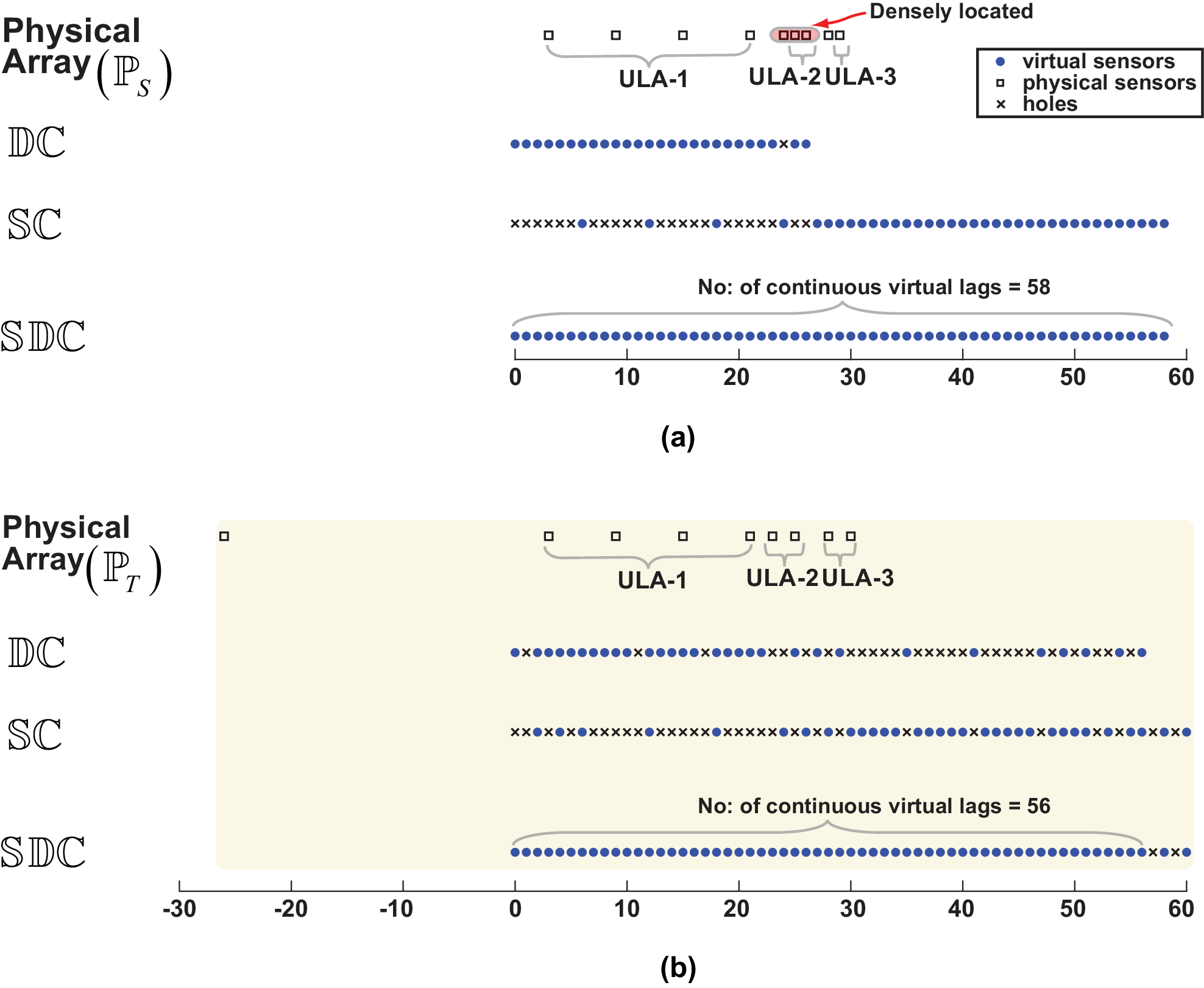} \vspace{-1.1em}
		\caption{\textcolor{black}{Design progression from SAULAs to TSAULAs for $N\!\!=\!\!9$. (a) Physical and virtual arrays of SAULAs. (b) Physical and virtual arrays of TSAULAs.} \label{ShifttoT}}
	\end{figure}
	\vspace{-1.25em}\begin{remark}
		SAULAs also feature in-built physical locations of the AULAs, as  they share similar structural layout. \textcolor{black}{However, the applied shift during the design phase defines a distinct set of sensor locations for SAULAs, as specified by (\ref{SAULAA}).} Likewise, this holds for their weight functions, since the associated inter-element spacing sets of the two designs are the same.
	\end{remark}
			
			\section{Transformed and Shifted Augmented ULAs (TSAULAs) Configuration}
			\textcolor{black}{Most of the existing designs, including \cite{NADiS}, \cite{NSANCS}, \cite{OSNA}, \cite{ZRSA},  \cite{RSNA}, \cite{Tdis} suffer from strong MC because a large number of their elements are placed with an inter-element spacing of $d.$ Although SAULAs have comparatively fewer separations of $d$, there can still be a performance loss in the presence of strong MC environment. Fig.~\ref{ShifttoT}(a) displays the physical and virtual (non-negative) arrays of SAULAs. As noticed, SAULAs contain three densely located element pairs ($d$ spacing) for $N\!\!= \!9$.} 
			\begin{figure} [t]	\centering
				\includegraphics[width = 0.5\textwidth]{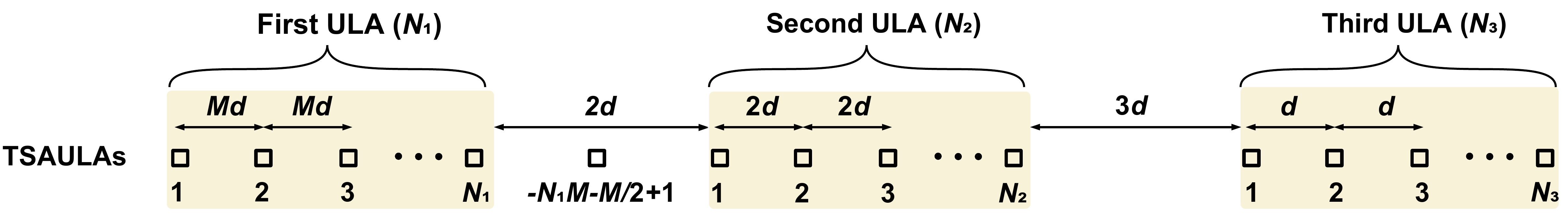} \vspace{-1.95em}
				\caption{Transformed and shifted augmented ULAs configuration.}  \label{TSAULA} \vspace{-1.6em}
			\end{figure}
			
			\vspace{-1em}This section introduces the TSAULAs design, which relocates the first single element of SAULAs to $(-N_{1}M - M/2 + 1)d$. Meanwhile, it transforms both dense ULAs of SAULAs into sparse ULAs, with each element spaced by 2\textit{d}. Also, the first two ULAs are separated by 2$d$, while the inter-ULA spacing for the second and third ULA is set to 3$d$. Subsequently, there is no spacing of $d$ in TSAULAs structure.
			
			\textcolor{black}{Fig.~\ref{ShifttoT}(b) shows the locations of physical and virtual (non-negative) arrays of TSAULAs. TSAULAs deplete the dense region in the SAULAs structure by eliminating all $d$-spacings. Despite having two holes in its $\mathbb{SDC}$, a long hole-free ULA remains intact as these holes appear at the tail.} Hence, TSAULAs produce only four fewer uDOFs than the SAULAs, but benefits from a sparser structure that inherently reduces MC.
			\vspace{-1.5em}\subsection{Design Rules and Array Structure}
			TSAULAs effectively configure three sparse ULAs plus a separate element, with the total number of array elements given by $N = N_{1} + N_{2} + N_{3} + 1$. Fig.~\ref{TSAULA} displays the proposed TSAULAs configuration. The associated inter-element spacing set $(\mathbb{A}_{\rm{TSAULAs}})$ and the maximum inter-element spacing set $(M)$ of TSAULAs can be expressed as
			\begin{equation} \label{set_transformeda}
				M = 2\left\lceil {N/4} \right\rceil, ~~~ \textcolor{black}{(N \ge 5)}\\
			\end{equation}
			\textcolor{black}{\begin{equation} \label{Associated_trans}
					\begin{array}{l} \mathbb{A}_\mathrm{TSAULAs} 
						= \{ \underbrace {M,M, \cdots ,M}_{{N_1} - 1},\underbrace {2,2, \cdots ,2}_{{N_2}}, - 2{N_1}M - M + 3, \\ 
						\quad \quad \quad \quad \quad \quad
						2{N_1}M + M,\underbrace {2,2, \cdots ,2}_{{N_3}-1}\},
					\end{array}
			\end{equation}}
			where
			\begin{equation} \label{set_transformed}
				\left\{ \begin{array}{l}
					{N_1} = N - M + 1,\\
					{N_2} = M/2 - 1{\rm{,}}\\
					{N_3} = M/2 - 1{\rm{.}}
				\end{array} \right.
			\end{equation}
			\textcolor{black}{ Since $N_2 \! = \!N_3 \!= \!M/2 \!- \!1$, ensuring that the TSAULAs design forms three ULAs and a separate sensor requires $N \ge 5$.}

			The location set $(\mathbb{P}_{\rm{T}})$ of TSAULAs is formulated by
			\begin{equation} \label{TSAULAA}
				\mathbb{P}_{\rm{T}} = {\mathbb{P}_{\rm{t_1}}} \cup {\mathbb{P}_{\rm{t_2}}} \cup {\mathbb{P}_{\rm{t_3}}},
			\end{equation}
			where
			\begin{equation} \label{TSAULAA1}
				{\mathbb{P}_{\rm{t_1}}} = \{(M/2) + \left\langle {0,({N_1} - 1)} \right\rangle (M)\},
			\end{equation}
			\begin{equation} \label{TSAULAA2}
				\begin{array}{l}
					{\mathbb{P}_{\rm{t_2}}} = {\mathbb{P}_{\rm{t_{2a}}}} \cup {\mathbb{P}_{\rm{t_{2b}}}},  \\ \quad   \quad \!\! =  \left\{ {({N_1}M - M/2) + \left\langle {1,M/2 - 1} \right\rangle ({\rm{2}})} \right\} \\
					\quad \quad \quad \cup  \left\{ {(-{N_1}M - M/2 + 1)} \right\},
				\end{array}
			\end{equation}
			\begin{equation} \label{TSAULAA3}
				{\mathbb{P}_{\rm{t_3}}} = \{({N_1}M + M/2 - 1) + \left\langle {1,M/2 - 1} \right\rangle ({\rm{2}})\}.
			\end{equation}
			\begin{figure}	[t] \vspace{-0.5em}
				\includegraphics[width = 0.5\textwidth]{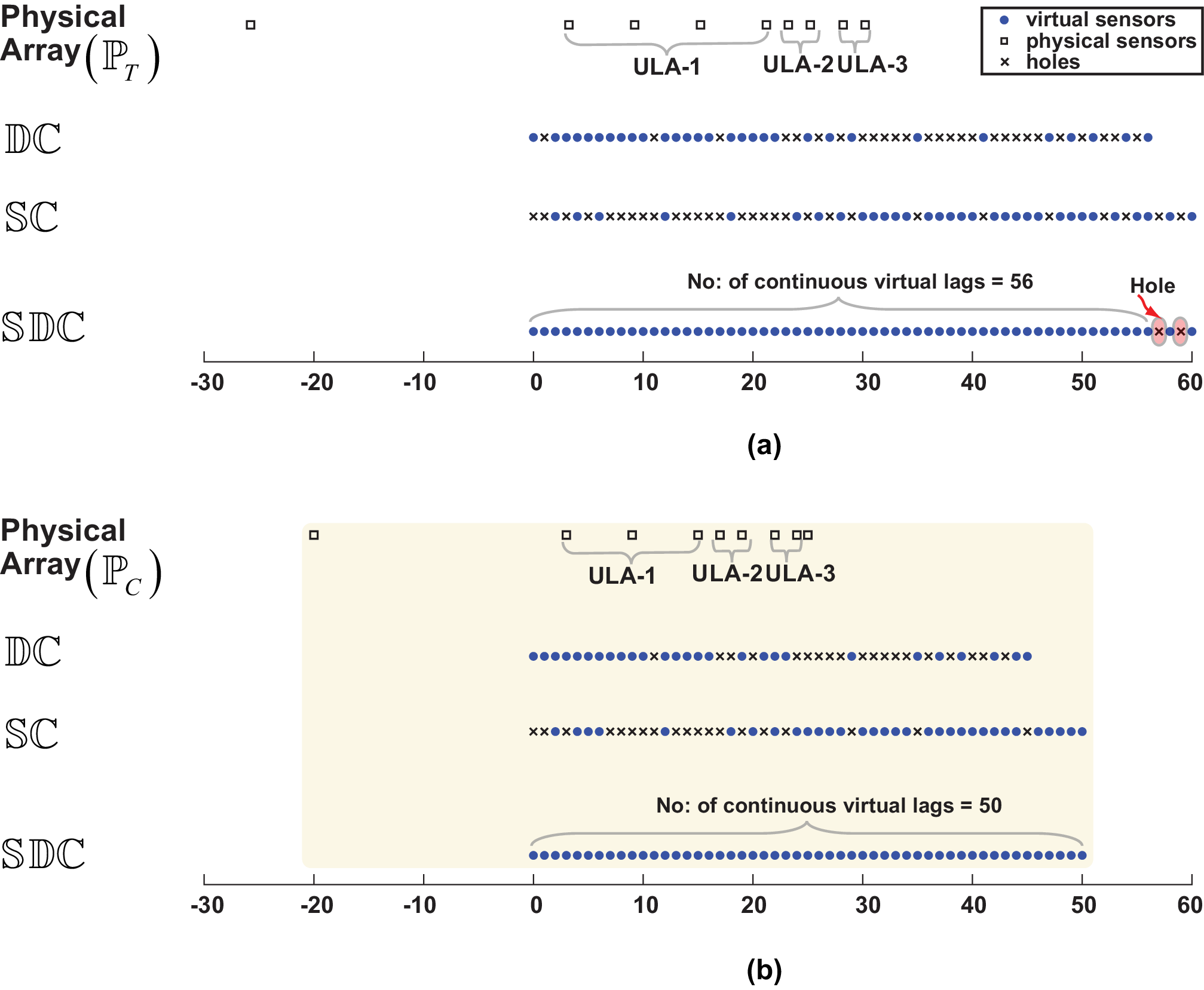} \vspace{-1.1em}
				\caption{\textcolor{black}{Design progression from TSAULAs to Co-TSAULAs configuration using $N = 9$ as an example. (a) Physical and virtual arrays of TSAULAs. (b) Physical and virtual arrays of Co-TSAULAs.} \label{TtoCo}} \vspace{-1.7em}
			\end{figure}
			\vspace{-2em}\begin{lemma}
				The TSAULAs enjoy the following properties:  \\
				(a) The joint sum-difference co-array is continuous in $ \pm  \left\langle {{l_{t_1}}, {l_{t_1}}} \right\rangle$, where ${l_{t_1}} = 
				2{N_1}M + 2M - 4.$ \\
				(b) The uDOFs for TSAULAs $({D_T})$ can be expressed as
				\begin{equation} \label{DTSAULA}
					{D_T} = 4{N_1}M + 4M - 7.
				\end{equation}
			\end{lemma}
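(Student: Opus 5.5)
The plan is to compute the non-negative part of $\mathbb{SDC}$ explicitly from (\ref{TSAULAA})--(\ref{TSAULAA3}) and then appeal to symmetry. First we write the four building blocks concretely:
\begin{itemize}
\item the sparse ULA $\mathbb{P}_{\rm t_1}=\{M/2+kM\}$ for $k\in\langle 0,N_1-1\rangle$, whose largest element is $N_1M-M/2$;
\item the $2d$-spaced ULA $\mathbb{P}_{\rm t_{2a}}=\{N_1M-M/2+2j\}$, spanning $\langle N_1M-M/2+2,\,N_1M+M/2-2\rangle$;
\item the $2d$-spaced ULA $\mathbb{P}_{\rm t_3}=\{N_1M+M/2-1+2j\}$, spanning $\langle N_1M+M/2+1,\,N_1M+3M/2-3\rangle$, with $j\in\langle 1,M/2-1\rangle$ in both cases;
\item the isolated sensor $s=-N_1M-M/2+1$.
\end{itemize}
$\mathbb{DC}$ is symmetric and $\mathbb{SC}$ carries both signs by (\ref{SC}). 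Hence $\mathbb{SDC}$ is symmetric about $0$, so it suffices to show $\langle 0,l_{t_1}\rangle\subseteq\mathbb{SDC}$. Part (b) then follows immediately, since $D_T=2l_{t_1}+1=4N_1M+4M-7$.

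A quick sanity check shows where the endpoint comes from. The largest lag involving $s$ is $\max(\mathbb{P}_{\rm t_3})-s=2N_1M+2M-4=l_{t_1}$. So the top of the continuous segment is produced by differences with the relocated sensor, which plays the role the two separate elements played in SAULAs.

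Next we split $\langle 0,l_{t_1}\rangle$ into a lower part and an upper part and cover each by one or two explicit families of lags, using parity.

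For the lower part, roughly $\langle 0,\,N_1M+M\rangle$, we use differences among the positive sensors.
\begin{itemize}
\item Lags from $\mathbb{P}_{\rm t_{2a}}-\mathbb{P}_{\rm t_1}$ have the form $(N_1-k-1)M+2j$. For fixed $k$ these fill the even residues of one length-$M$ block.
\item Lags from $\mathbb{P}_{\rm t_3}-\mathbb{P}_{\rm t_1}$ have the form $(N_1-k)M-1+2j$. These fill the odd residues of the same blocks, shifted by one.
\item The missing endpoints of each block (multiples of $M$, and $M/2$-type offsets) are supplied by the intra-ULA differences $\mathbb{P}_{\rm t_1}-\mathbb{P}_{\rm t_1}=\{kM\}$, $\mathbb{P}_{\rm t_3}-\mathbb{P}_{\rm t_{2a}}$ (odd lags up to about $M$), and the $2d$ differences within each dense-free ULA.
\end{itemize}
This mirrors the MISC-type argument of Lemma 1(a). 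We must check small lags such as $1$ and $3$ separately, since there is no $d$-spacing. For example, $1=(N_1M+M/2+1)-(N_1M+M/2-2)$ lies in $\mathbb{P}_{\rm t_3}-\mathbb{P}_{\rm t_{2a}}$.

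For the upper part, roughly $\langle N_1M,\,l_{t_1}\rangle$, we use two families.
\begin{itemize}
\item Differences with $s$: these are $p-s=p+N_1M+M/2-1$. With $p\in\mathbb{P}_{\rm t_1}$ they give the sparse lags $N_1M+M-1+kM$. With $p\in\mathbb{P}_{\rm t_{2a}}$ they give the odd run $2N_1M-1+2j$. With $p\in\mathbb{P}_{\rm t_3}$ they give the even run $2N_1M+M-2+2j$, which ends at $l_{t_1}$.
\item Sums of positive sensors, which fill the complementary parities. Relevant ones are $\mathbb{P}_{\rm t_1}+\mathbb{P}_{\rm t_{2a}}$ (even, blockwise up to $2N_1M$), $\mathbb{P}_{\rm t_1}+\mathbb{P}_{\rm t_3}$ (odd), $\mathbb{P}_{\rm t_{2a}}+\mathbb{P}_{\rm t_3}$, $2\mathbb{P}_{\rm t_3}$ and $\mathbb{P}_{\rm t_{2a}}+\mathbb{P}_{\rm t_{2a}}$. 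These reach $2N_1M+3M-6$ but with parity structure.
\item Sums involving $s$ (for instance $-(s+p)$) are also available if a gap needs patching.
\end{itemize}
Each family is an arithmetic progression with step $2$ or $M$. We therefore verify coverage by writing each target lag $n$ as $n=qM+r$ and exhibiting, case by case on the parity of $r$ and the range of $q$, an explicit pair $(m_u,m_v)$.

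The main obstacle is the seam region around $N_1M$ to $N_1M+M$, and again around $2N_1M$ to $2N_1M+M$. There, both $\mathbb{DC}$ and $\mathbb{SC}$ become sparse. In SAULAs, the single $\mathbb{DC}$ hole at $N_1M$ was filled by $\mathbb{SC}$ (Lemma 2), and we expect the same splicing here, but with parity-dependent pairings. Our first attempt will be to tabulate, for $n$ in these windows, which of $p-s$, $\mathbb{P}_{\rm t_3}-\mathbb{P}_{\rm t_1}$ and $\mathbb{P}_{\rm t_1}+\mathbb{P}_{\rm t_{2a}}$ hits $n$, and to check the $M=6$ boundary case separately.

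Finally, to show $l_{t_1}$ is maximal we exhibit that $l_{t_1}+1$ is not in $\mathbb{SDC}$, since the claim states continuity exactly on $\pm\langle 0,l_{t_1}\rangle$. The difference part is clear: $l_{t_1}$ is the largest difference, so the check reduces to $\mathbb{SC}$. Every sum of two positive sensors is at most $2\max(\mathbb{P}_{\rm t_3})=2N_1M+3M-6$. For such a sum to equal $l_{t_1}+1=2N_1M+2M-3$ (odd), it must pair sensors of opposite parity. We then check by parity that no such pair attains this value.
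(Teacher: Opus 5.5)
Your plan is the right one: reduce to $\langle 0,l_{t_1}\rangle$ by symmetry, cover that range residue by residue modulo $M$, then show $l_{t_1}+1\notin\mathbb{SDC}$. Your upper-part families (the three runs $p-s$ and the positive sums) are computed correctly. The step that fails is the lower part. It is not true that the lags in $\langle 0,N_1M\rangle$ are all differences of the positive sensors.

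First, TSAULAs has no $d$-spacing ($w(1)=0$), so the lag $1$ is not in $\mathbb{DC}$ at all. Your witness is an arithmetic slip: $(N_1M+M/2+1)-(N_1M+M/2-2)=3$. Second, the whole residue class $qM-1$, $q\in\langle 2,N_1\rangle$, is missing from $\mathbb{DC}$. An odd difference must pair a sensor of $\mathbb{P}_{\rm t_3}\cup\{s\}$ with one of $\mathbb{P}_{\rm t_1}\cup\mathbb{P}_{\rm t_{2a}}$. But $\mathbb{P}_{\rm t_3}-\mathbb{P}_{\rm t_1}$ only has residues $1,3,\dots,M-3$ modulo $M$, and $\mathbb{P}_{\rm t_3}-\mathbb{P}_{\rm t_{2a}}$ only gives the odd lags in $\langle 3,2M-5\rangle$. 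Every difference with $s$ is at least $N_1M+M-1$. For $N=9$ (array $\{-26,3,9,15,21,23,25,28,30\}$), the lags $1,11,17,23$ are all absent from $\mathbb{DC}$. The ``missing endpoints'' in your third lower-part bullet are exactly these lags, and no intra-ULA difference supplies them. So the MISC-type argument of Lemma~1(a) does not transfer.

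The missing idea is that the relocated sensor $s=-N_1M-M/2+1$ also feeds the \emph{low} end of the co-array through the negated sum co-array:
$-(s+\mathbb{P}_{\rm t_1})=\{qM-1: q\in\langle 1,N_1\rangle\}$ and $-(s+\mathbb{P}_{\rm t_{2a}})=\{1,3,\dots,M-3\}$.
This is exactly how Appendix~C fills the difference holes $\mathbb{DH}_3$ and $\mathbb{DH}_4=\{1\}$. In your plan, $-(s+p)$ appears only as an optional patch for the upper part, so the lower-part case analysis would stall.

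Two further fixes are needed. Add these two families to the lower part. Also add $\mathbb{P}_{\rm t_1}+\mathbb{P}_{\rm t_1}$, which your upper-part list omits; it is the only source of the multiples $qM$, $q\in\langle N_1,2N_1-1\rangle$. With these, the blockwise check closes in every block.

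Your maximality argument is then fine, once you also note that every sum involving $s$ has magnitude at most $|2s|=2N_1M+M-2<l_{t_1}+1$.
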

			\begin{proof}
				Refer to Appendix C for the proof.
			\end{proof} 
			\begin{remark}
				Similar to SAULAs, the TSAULAs design also features in-built physical locations since the first ULA, containing these locations, remains identical for both structures. Hence, the TSAULAs provides an equal number of in-built physical locations as the SAULAs with similar conditions to satisfy.
			\end{remark}
			\begin{remark}
				The TSAULAs and SAULAs contain an equal number of elements in the first two ULAs, while the first ULA of the two designs is also identical. This implies that once the SAULAs are designed, it automatically provides the $N - M + 1$ locations of TSAULAs. Hence, the SAULAs can be easily adapted to TSAULAs during the design phase, while conversely, the reverse also holds. 
			\end{remark}
			\vspace{-1.5em}\subsection{Weight Functions}
			Based on (\ref{Associated_trans}), we can determine the weight functions of TSAULAs design at co-array index $f = 1,2,3$ as
			\begin{equation}
				w(1) = 0, w(2) = 2\left\lceil {N/4} \right\rceil - 3, {w(3)} = 1,
			\end{equation}
			where $w(1)$ and $w(3)$ are smaller than those of the SAULAs in (\ref{WAULAs-A}). Hence, it is conclusive that the TSAULAs are sparser than the SAULAs.
			\vspace{-1.5em}\subsection{Complementary Transformed \& Shifted Augmented ULAs (Co-TSAULAs) Configuration}
			This section presents the Co-TSAULAs design. \textcolor{black}{As discussed earlier, the SAULAs produce a hole-free $\mathbb{SDC}$ but contain two dense ULAs. While TSAULAs eliminate the dense ULAs, they introduce holes at the tail of $\mathbb{SDC}$. The Co‑TSAULAs are designed to inherit the valuable properties of the SAULAs and TSAULAs.} In particular, it both minimizes the number of closely-spaced element pairs compared to the SAULAs and produces a hole-free $\mathbb{SDC}$. The Co-TSAULAs achieve this by relocating the last element from the first ULA of the TSAULAs  to the tail of its third ULA with spacing \textit{d}.
			\begin{figure} [t] 	\centering 
				\includegraphics[width = 0.5\textwidth]{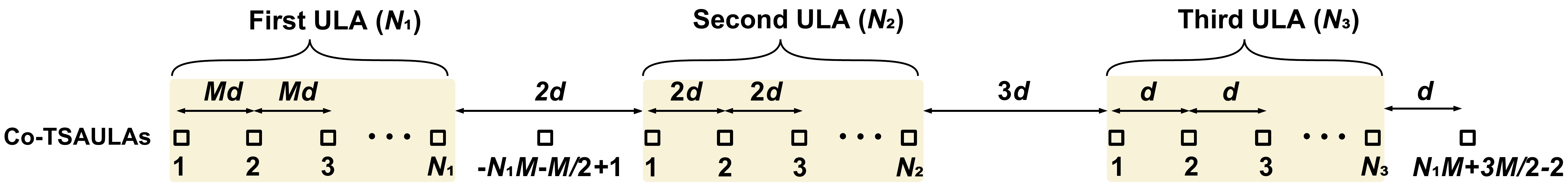} \vspace{-1.95em}
				\caption{Complementary transformed and shifted augmented ULAs design.} \label{CoTSAULA} \vspace{-1.4em}
			\end{figure}
			
			\textcolor{black}{Fig.~\ref{TtoCo} shows the physical and virtual (non-negative) arrays of the TSAULAs and Co-TSAULAs for $N = 9$. As observed, the lags in the $\mathbb{DC}$ of Co‑TSAULAs fill the holes in its $\mathbb{SC}$, while the reverse also holds true. Hence, the $\mathbb{SDC}$ of Co-TSAULAs is hole-free. 
				Meanwhile, the Co-TSAULAs contain fewer spacings of $d$ than the SAULAs (see in Fig.~\ref{ShifttoT}(a)).} 
			\begin{figure*} \vspace{-1.35em}\hspace{-7.5em}
				\includegraphics[height=10cm,width=24cm]{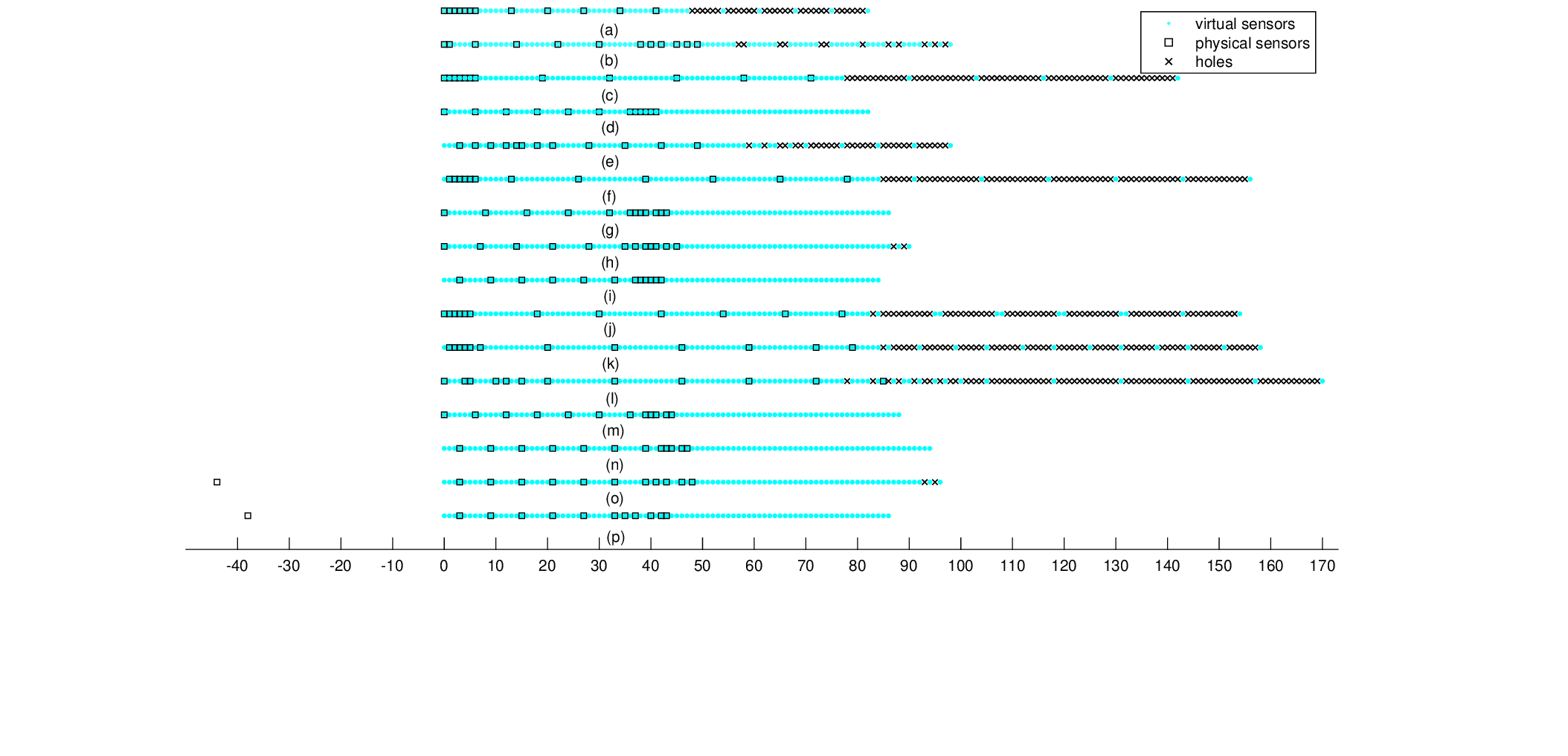} \vspace{-8.25em} 
				\textcolor{black}{\caption{Geometric distribution of physical and $\mathbb{SDC}$ \textcolor{black}{(non-negative)} elements for different sparse arrays. (a) NA. (b) MISC. (c) NADiS. (d) NSANCS. (e) OCA-SDCA. (f) OSNA. (g) Tdis-ULAs. (h) STNA. (i) ZRSA. (j) GSNA. (k) RSNA. (l) TCNA. (m) AULAs. (n) SAULAs. (o) TSAULAs. (p) Co-TSAULAs.} \label{Charac}} 
			\end{figure*}
			\begin{table*} [t!] \vspace{-1em}
				\scriptsize
				\caption{\textcolor{black}{Characteristics Comparison of Different Sparse Arrays}}
				\label{my-label}
				\setlength\tabcolsep{4pt}
					\begin{tabular}{P{1.2cm} P{0.4cm} P{0.7cm} P{0.7cm} P{0.9cm} P{1.1cm} P{0.4cm} P{1.1cm} P{0.6cm} P{0.7cm} P{0.7cm} P{0.7cm} P{0.7cm} P{0.7cm} P{1.0cm} P{1.0cm} P{1.2cm} P{1.3cm}} \toprule 
						Array Design & NA & MISC &  NADiS & NSANCS & \textcolor{black}{OCA-SDCA} & \textcolor{black}{OSNA} & Tdis-ULAs & \textcolor{black}{TCNA} & STNA & \textcolor{black}{ZRSA} & GSNA & \textcolor{black}{RSNA} & AULAs & SAULAs &  TSAULAs  & Co-TSAULAs	\\ 
						\midrule
						uDOFs  & \textcolor{black}{95} & \textcolor{black}{113} & \textcolor{black}{155} & \textcolor{black}{165} & \textcolor{black}{117} & \textcolor{black}{169} & \textcolor{black}{173} &
						\textcolor{black}{155} &
						 \textcolor{black}{173} & \textcolor{black}{169} & \textcolor{black}{165} & \textcolor{black}{169} & \textcolor{black}{177} & \textcolor{black}{189} & \textcolor{black}{185} & \textcolor{black}{173} \\
						Holes & \textcolor{black}{60} & \textcolor{black}{24} & \textcolor{black}{120} & 0 & \textcolor{black}{60} & \textcolor{black}{132} & 0  & \textcolor{black}{148} & \textcolor{black}{4} & 0 & \textcolor{black}{122} & \textcolor{black}{124} & 0 & 0 & \textcolor{black}{4} & 0 \\
						CVA  &  94 & 112  & 154 & 164 & 116 & 168 & 172 & \textcolor{black}{154} & 172 & 168 & 164 & \textcolor{black}{168} & \textcolor{black}{176} & \textcolor{black}{188} & \textcolor{black}{184} & \textcolor{black}{172} \\
						S.E (\%) &  57.3 & 57.1 & 54.2 & 100 & 59.1 & 53.8 & 100 & \textcolor{black}{45.3} & 95.5 & 100 & 53.25 & 53.1 & 100 & 100 & 95.8  & 100 \\ 
						\bottomrule
					\end{tabular} \vspace{-1.8em}
			\end{table*}
			
			Fig.~\ref{CoTSAULA} displays the Co-TSAULAs configuration. The associated inter-element spacing set $(\mathbb{A}_{\rm{Co-TSAULAs}})$ of Co-TSAULAs can be formulated as
			\textcolor{black}{\begin{equation} \label{Complimentary}{
						\hspace{-0.9em}\begin{array}{l} \mathbb{A}_\mathrm{Co-TSAULAs} = \{ \underbrace {M,M, \cdots ,M}_{{N_1} - 1},\underbrace {2,2, \cdots ,2}_{{N_2}}, -2{N_1}M \!- \!M \!+ \!3,  \\ 
							\quad \quad \quad \quad \quad \quad 	\quad
							2{N_1}M + M,\underbrace {2,2, \cdots ,2}_{{N_3}-1}, 1\}, 
					\end{array}}
			\end{equation}} 
			where $M, N_{2}$, and $N_{3}$ take the same values as defined in (\ref{set_transformed}), except $N_{1}$ can now be obtained by $N_{1} =  N - M$. 
			
			The location set of Co-TSAULAs $(\mathbb{P}_{\rm{C}})$ is formulated by    
			\begin{equation} \label{Co-TSAULAA}
				\mathbb{P}_{\rm{C}} = {\mathbb{P}_{\rm{c_1}}} \cup {\mathbb{P}_{\rm{c_2}}} \cup {\mathbb{P}_{\rm{c_3}}}, 
			\end{equation}
			where
			\begin{equation} \label{Co-TSAULAA1}
				{\mathbb{P}_{\rm{c_1}}} = \left\{ (M/2) + \left\langle {0,({N_1} - 1)} \right\rangle (M)\right\},
			\end{equation}
			\begin{equation} \label{Co-TSAULAA2}
				\begin{array}{l}
					{\mathbb{P}_{\rm{c_2}}} =  {\mathbb{P}_{\rm{c_{2a}}}} \cup {\mathbb{P}_{\rm{c_{2b}}}}, \\ \quad   \quad \!\! =  \left\{ {({N_1}M - M/2) + \left\langle {1,M/2 - 1} \right\rangle ({\rm{2}})} \right\}  \\
					\quad \quad \quad \cup  \left\{ {(-{N_1}M - M/2 + 1)} \right\},
				\end{array}
			\end{equation}
			\begin{equation} \label{Co-TSAULAA3}
				\begin{array}{l}
					{\mathbb{P}_{\rm{c_3}}} = \left\{({N_1}M + M/2 - 1) + \left\langle {1, M/2 - 1} \right\rangle ({\rm{2}}) \right\} \\
					\quad \quad \quad \cup \left\{ {({N_1}M + 3M/2 - 2)} \right\}.
				\end{array}
			\end{equation}
			\begin{lemma}
				The Co-TSAULAs enjoy the following properties:\\
				(a) The joint sum-difference co-array is continuous in $ \pm  \left\langle {{l_{c_1}}, {l_{c_1}}} \right\rangle$, where ${l_{c_1}} = 2{N_1}M + 3M - 4$. \\
				(b) The DOFs for Co-TSAULA $({D_C})$ can be computed as
				\begin{equation} \label{CTSAULA}
					{D_C} = 4{N_1}M + 6M - 7.
				\end{equation}
			\end{lemma}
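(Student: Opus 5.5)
The plan is to reduce the statement to Lemma 3 for TSAULAs, and then handle only the newly created top lags by direct enumeration. Write $K=M/2$.

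\textbf{Step 1: decompose Co-TSAULAs.} Comparing (\ref{Co-TSAULAA})--(\ref{Co-TSAULAA3}) with (\ref{TSAULAA})--(\ref{TSAULAA3}), the set $\mathbb{P}_{\rm C}$ is exactly the TSAULAs location set built with the same $M$ and with first-ULA size $N_1$, plus one extra sensor $e=N_1M+3K-2$. Here $N_1$ takes the Co-TSAULAs value $N-M$. The proof of Lemma 3 (Appendix C) only uses the structural parameters $N_1$ and $M$, not the relation $N_1=N-M+1$. So I expect its argument to carry over verbatim and give
\[
\pm\langle 0,\,2N_1M+2M-4\rangle=\pm\langle 0,\,2N_1M+4K-4\rangle
\]
inside the $\mathbb{SDC}$ generated by the TSAULAs part. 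Adding a sensor can only enlarge $\mathbb{DC}$ and $\mathbb{SC}$. Since the $\mathbb{SDC}$ is symmetric about $0$, it therefore suffices to show that every lag in
\[
\langle 2N_1M+4K-3,\; 2N_1M+6K-4\rangle
\]
is produced, and that nothing larger appears.

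\textbf{Step 2: upper end.} The largest element is $e$ and the smallest is $q=-N_1M-K+1$. Hence the largest positive sum is $2e=2N_1M+6K-4=l_{c_1}$, and the largest difference is $e-q=2N_1M+4K-3$. This is below $2e$, so $l_{c_1}+1\notin\mathbb{SDC}$. This gives maximality of $l_{c_1}$ and, by symmetry, $D_C=2l_{c_1}+1=4N_1M+6M-7$, which is part (b).

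\textbf{Step 3: filling the new band.} This is the main check. Write $c_{3,j}=N_1M+K-1+2j$ for $j\in\langle 1,K-1\rangle$; the lags to cover have offsets $4K-3$ through $6K-4$ above $2N_1M$. I will cover them using three sources, matched to parity:
\begin{itemize}
\item the difference $e-q=2N_1M+4K-3$ gives offset $4K-3$;
\item the sums $e+c_{3,j}=2N_1M+4K-3+2j$ give all odd offsets $4K-1,\dots,6K-5$;
\item the sums $c_{3,j}+c_{3,j'}=2N_1M+2K-2+2(j+j')$, with $j+j'\in\langle 2,2K-2\rangle$, give all even offsets $2K+2,\dots,6K-6$, in particular $4K-2,\dots,6K-6$;
\item finally $2e$ gives offset $6K-4$.
\end{itemize}
Together these exhaust the required band with no gaps.

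\textbf{Main obstacle.} The delicate point is Step 1: confirming that Appendix C's continuity argument for TSAULAs really depends only on $(N_1,M)$, and does not use $N=N_1+M-1$ anywhere, for example through the requirement $N\ge5$. If it does, I would instead redo the segment covering directly. The pieces would be: $c_1$ against the step-2 ULAs for small lags; $-(q+x)$ and $x-q$ for the middle lags; and sums within $\mathbb{P}_{\rm{c_{2a}}}\cup\mathbb{P}_{\rm{c_3}}$ for the upper band. For the upper band I would again alternate parities, exploiting that $\mathbb{P}_{\rm{c_{2a}}}$ and $\mathbb{P}_{\rm{c_3}}$ have opposite parity (since $N_1M$ is even). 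A small-$M$ sanity check (e.g.\ $M=6$, $N=9$, as in Fig.~\ref{TtoCo}) should be used to verify the boundary indices.
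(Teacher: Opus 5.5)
Your proposal is correct and takes essentially the paper's route: Appendix D likewise reuses the Appendix C argument for the TSAULAs-type part and then handles the top band created by the extra sensor $N_1M+3M/2-2$. That argument depends only on $N_1$ and $M$, so the case $N_1=M-3$ (arising for $N\equiv 1 \pmod 4$) is covered. Your version is slightly more careful, since the lag $2N_1M+2M-3$ actually comes from the difference $e-q$ and not from $\mathbb{P}_{\rm{c_2}}+\mathbb{P}_{\rm{c_3}}$ as the appendix states, and you verify maximality of $l_{c_1}$ explicitly for part (b).
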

			\begin{proof}
				The proof is provided in Appendix D.
			\end{proof} 
			
			\begin{remark}
				Since the Co-TSAULAs originate from the concept of TSAULAs, its $w(2)$ is identical to that of the TSAULAs, while due to a single element of \textit{d} spacing is placed at the end of the third ULA, $w(1) = 1$, and $w(3) = 2$, for $M \ge 6$.
			\end{remark}
			\begin{remark}
				\textcolor{black}{The first ULA of TSAULAs and Co-TSAULAs is identical, except that one element is repositioned. Hence, for a given $N$, Co-TSAULAs provide $M - 3$ in-built physical locations, which is exactly one fewer than the $M - 2$ locations provided by TSAULAs.}  
			\end{remark}

				\section{Numerical Simulations}
				This section conducts simulations to demonstrate the superiority of the AULAs structures by comparing their \textcolor{black}{sum-difference co-array} characteristics, DOFs, mutual coupling leakage, and DOA estimation performance with the existing array designs. We employed the co-array MUSIC algorithm \cite{Coarray, OCA} and used $N = 12$ for all the sparse arrays.
					\begin{figure*} [t]\vspace{-2.25em}\hspace{-1.75em}  
					\begin{minipage}[b]{.4\columnwidth}
						\includegraphics[width=4.95cm, trim=.35cm .35cm .5cm .5cm, height = 5cm]{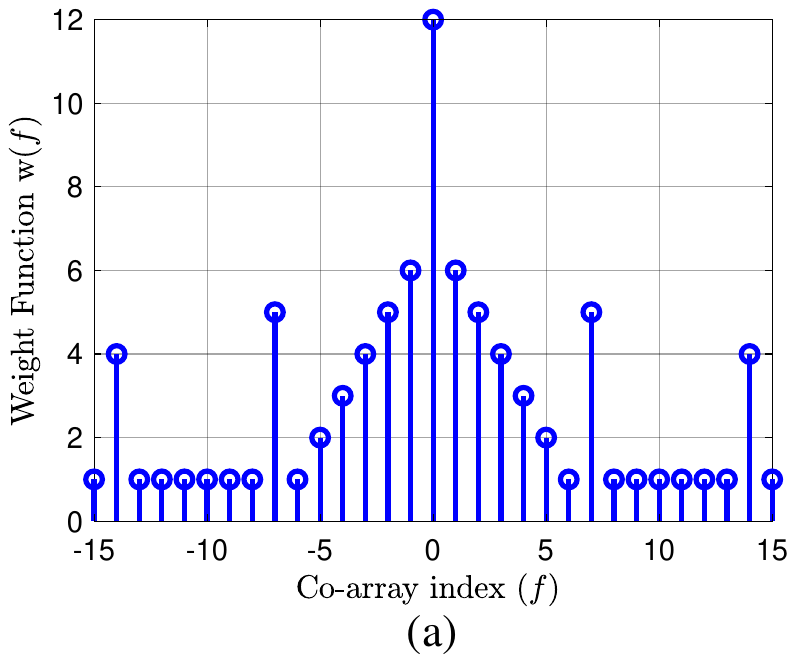}
					\end{minipage}%
					\begin{minipage}[b]{.4\columnwidth}
						\includegraphics[width=4.95cm, trim=.35cm .35cm .5cm .5cm, height = 5cm]{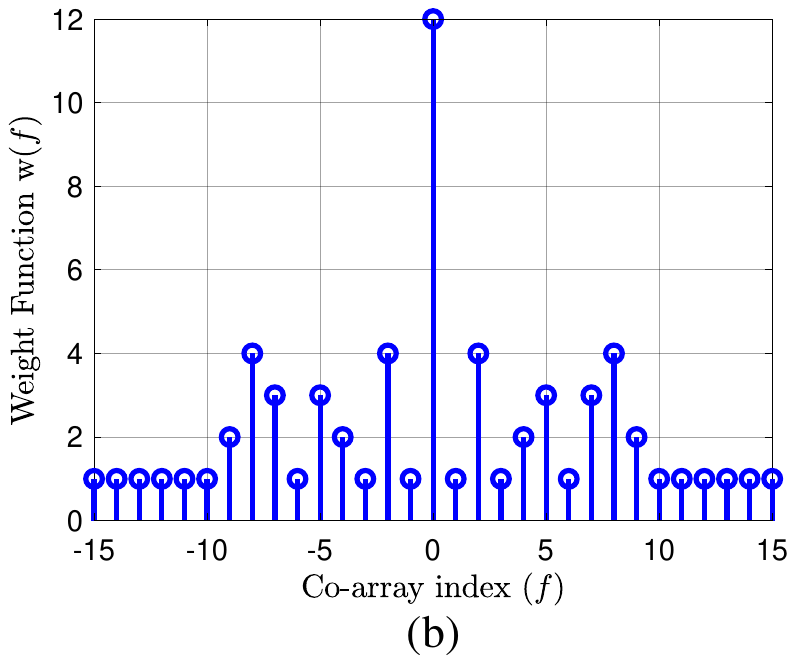}
					\end{minipage}
					\begin{minipage}[b]{.4\columnwidth}
						\includegraphics[width=4.95cm, trim=.35cm .35cm .5cm .5cm, height = 5cm]{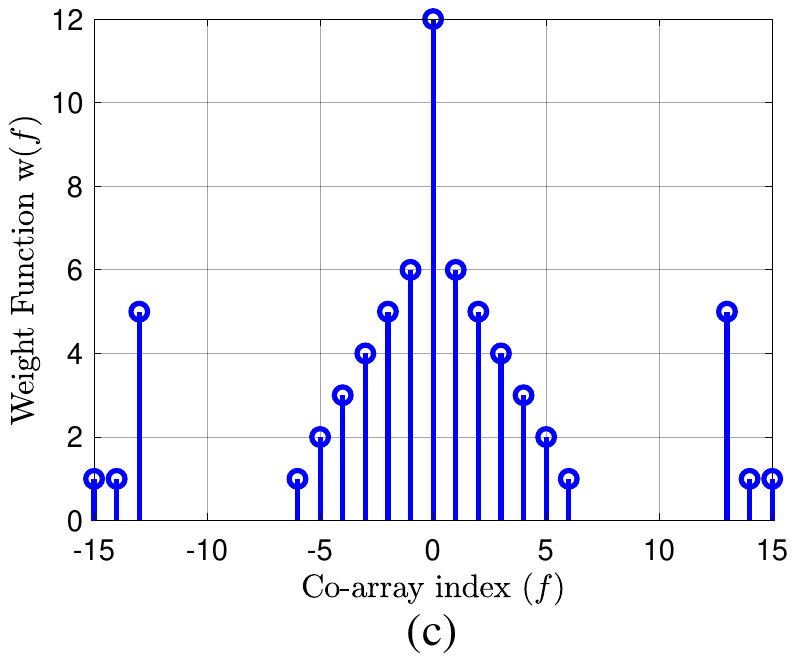}
					\end{minipage}
					\begin{minipage}[b]{.4\columnwidth}
						\includegraphics[width=4.95cm, trim=.35cm .35cm .5cm .5cm, height = 5cm]{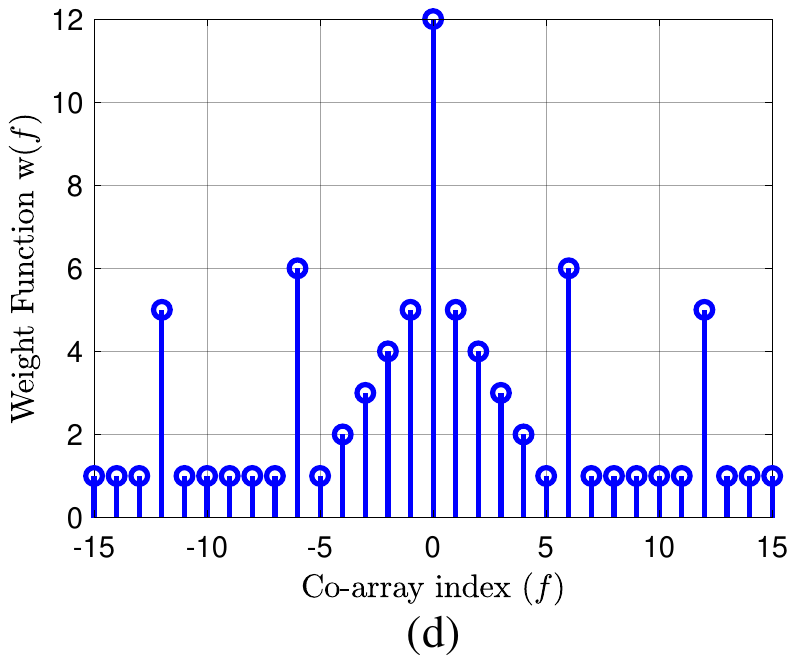}
					\end{minipage}\vspace{-15em} 
					\begin{minipage}[b]{.4\columnwidth}
						\includegraphics[width=4.95cm, trim=.35cm .35cm .5cm .5cm, height = 5cm]{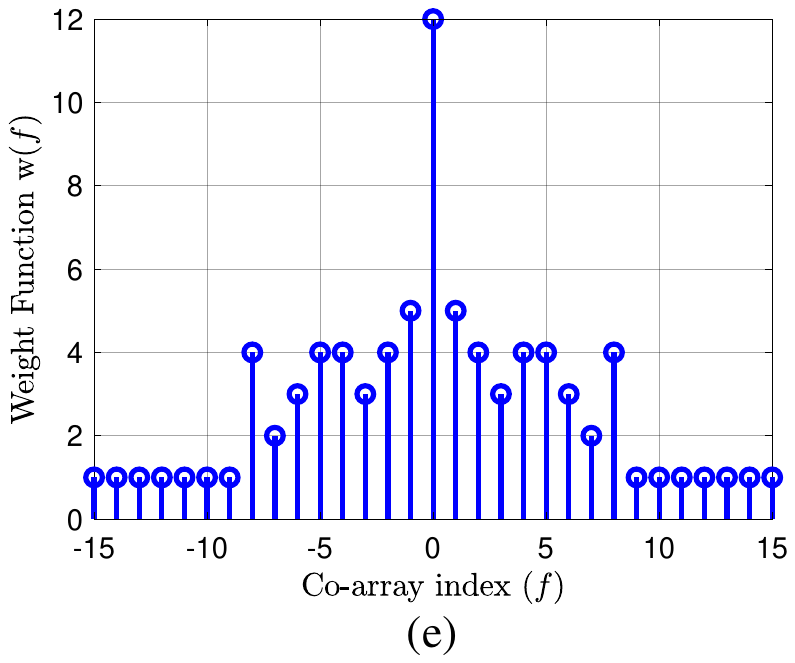}
					\end{minipage} \vspace*{7em}   \\ \hspace*{-1.75em}
					\begin{minipage}[t]{.4\columnwidth}  
						\includegraphics[width=4.95cm, trim=.35cm .35cm .5cm .5cm, height = 5cm]{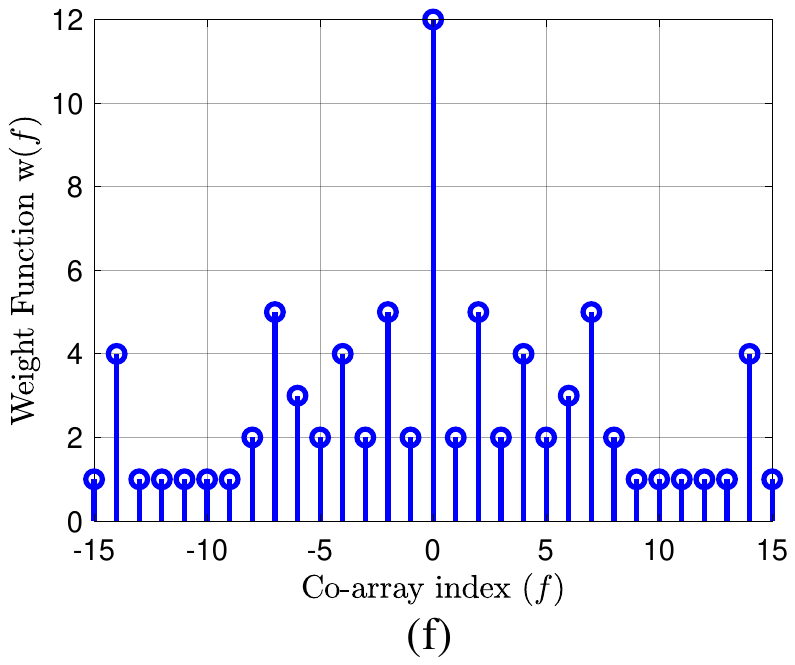}	
					\end{minipage} \hspace*{-0.75em}
					\begin{minipage}[b]{.4\columnwidth}
						\includegraphics[width=4.95cm, trim=.35cm .35cm .5cm .5cm, height = 5cm]{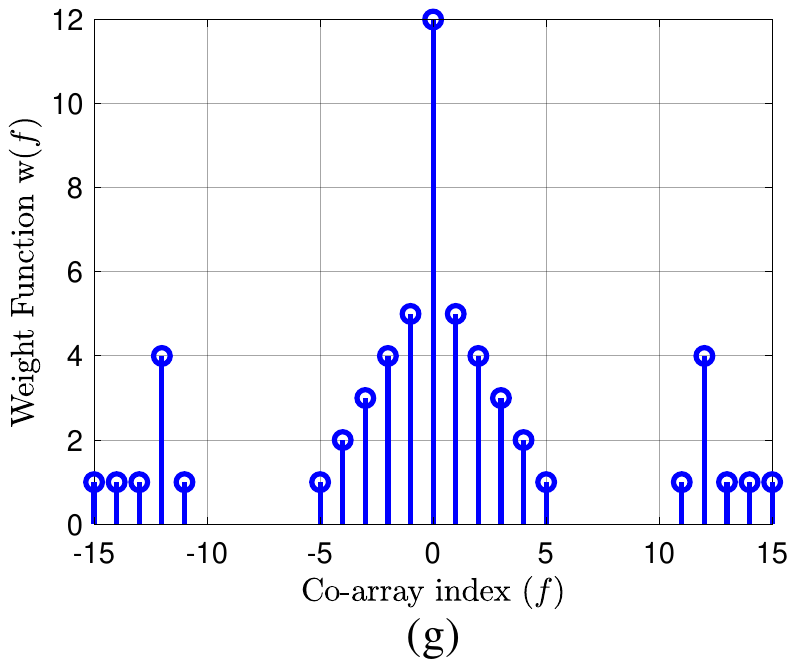}
					\end{minipage} \hspace*{-0.35em}
					\begin{minipage}[b]{.4\columnwidth}
						\includegraphics[width=4.85cm, trim=.35cm .35cm .5cm .5cm, height = 5cm]{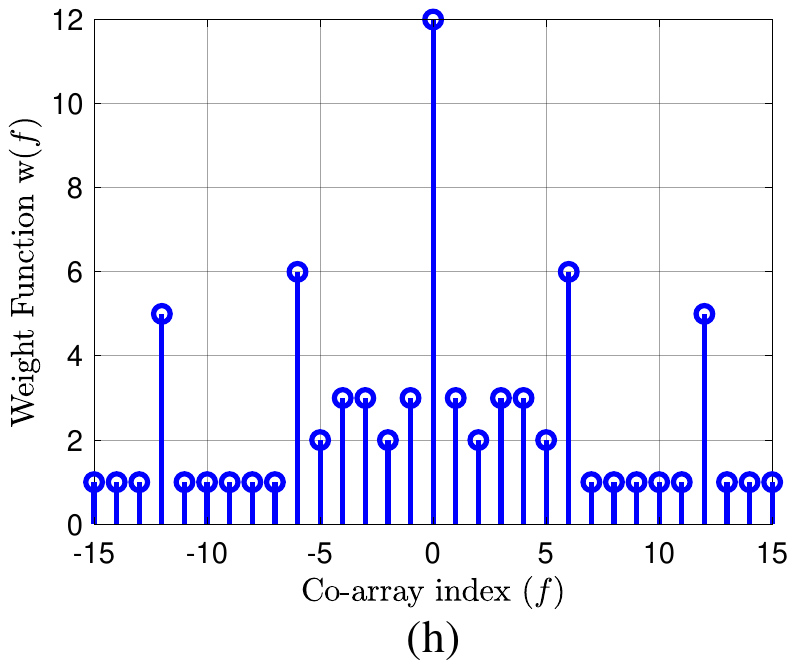}
					\end{minipage} \hspace*{-0.35em}
					\begin{minipage}[b]{.4\columnwidth}
						\includegraphics[width=4.95cm, trim=.35cm .35cm .5cm .5cm, height = 5cm]{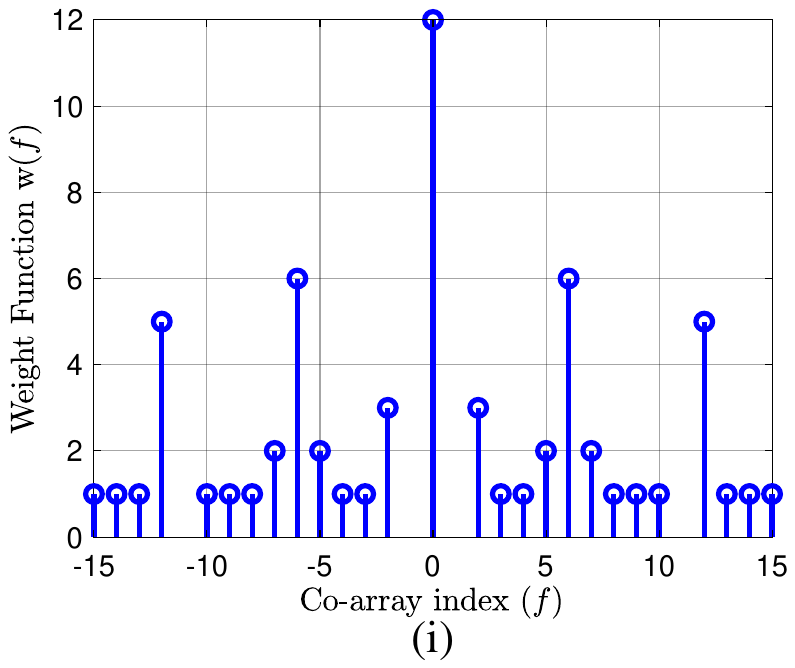}
					\end{minipage} \hspace*{-0.35em}
					\begin{minipage}[b]{.4\columnwidth} 
						\includegraphics[width=4.95cm, trim=.35cm .35cm .5cm .5cm, height = 5cm]{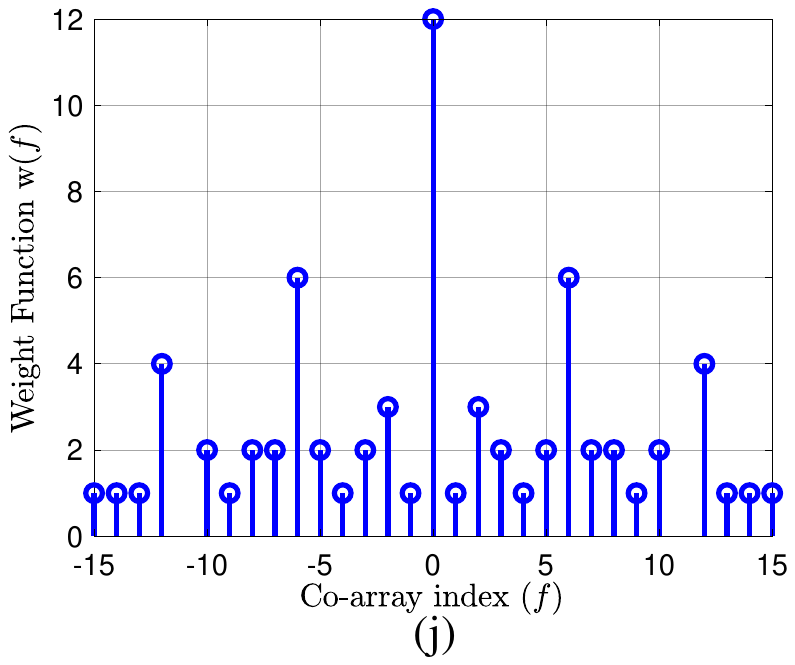}
					\end{minipage} \vspace{-4.95em} \hspace*{-0.7em}
					\caption{The weight functions of different sparse arrays for $N = 12$ elements. (a) NA. (b) MISC (c) NADiS. (d) NSANCS. (e) Tdis-ULAs. (f) STNA. (g) GSNA. (h) SAULAs (i) TSAULAs. (j) Co-TSAULAs. }\label{Weight} 
				\end{figure*}
				\begin{table*} \vspace{-0.75em}  
					\scriptsize
					\caption{SUMMARY Of WEIGHT FUNCTION AND MUTUAL COUPLING LEAKAGE }
					\label{MCT}
					\setlength\tabcolsep{4pt}
						\begin{tabular}{P{1.2cm} P{0.4cm} P{0.7cm} P{0.7cm} P{0.9cm} P{1.1cm} P{0.4cm} P{1.1cm} P{0.6cm} P{0.7cm} P{0.7cm} P{0.7cm} P{0.7cm} P{0.7cm} P{1.0cm} P{1.0cm} P{1.2cm} P{1.3cm}} \toprule 
							Array Design & NA & MISC &  NADiS & NSANCS & \textcolor{black}{OCA-SDCA} & \textcolor{black}{OSNA} & Tdis-ULAs & \textcolor{black}{TCNA} & STNA & \textcolor{black}{ZRSA} & GSNA & \textcolor{black}{RSNA} & AULAs & SAULAs &  TSAULAs  & Co-TSAULAs	\\ 
							\midrule
							$w(1)$ & 6   & 1   & 6   & 5  &  1 & 5 & 5  & \textcolor{black}{1} & 2 & 5   & 5 & 4    &  3   & 3 & 0 & 1  \\ 
							$w(2)$  & 5       & 4        & 5  & 4 & 1  & 4 & 4 & \textcolor{black}{1} & 5   & 4 & 4 & 4   &  2   & 2 & 3  &  3  \\ 
							$w(3)$  & 4       & 1       & 4 & 3  & 6  & 3 & 3 & \textcolor{black}{1} & 2 & 3  & 3 & 3   &  3   & 3 & 1  & 2   \\ 
							$L_C$   & 0.33       & 0.194   & 0.328  & 0.305 & 0.188 & 0.300  & 0.308  & \textcolor{black}{0.161} & 0.241 & 0.307  & 0.299 & 0.278   &  0.249  & 0.249  &  0.140  & 0.189  \\ 
							\bottomrule
						\end{tabular} \vspace{-1.5em}
					\end{table*}
				\subsection{Characteristics Comparison of Different Sparse Arrays}
				\textcolor{black}{} First, we compare the physical and $\mathbb{SDC}$ \textcolor{black}{(non-negative)} arrays of different designs in Fig.~\ref{Charac}, \textcolor{black}{demonstrating how a physical array structure can either enhance the effective virtual aperture or introduce holes that lead to performance loss \cite{Interpolation1, Interpolation2}.} It can be observed from Fig.\,\ref{Charac} that the NA has the least number of continuous virtual lags and suffers from holes at irregular intervals. Likewise, MISC and Pdis-ULAs exhibit similar limitations, as they are also inefficient at exploiting the $\mathbb{SDC}$. In contrast, the designs introduced for the joint optimization of $\mathbb{SC}$ and $\mathbb{DC}$ perform better. As such, the NADiS, OSNA, RSNA, and TCNA exhibit longer hole-free segment in $\mathbb{SDC}$ than the MISC arrays. Nevertheless, their $\mathbb{SDC}$ still congests with numerous holes. By comparison, NSANCS, ZRSA, and Tdis-ULAs produce a hole-free $\mathbb{SDC}$, with Tdis-ULAs generating more virtual lags. The proposed AULAs structures outmatch all the existing designs. Although the TSAULAs design has two holes, it further increases the number of continuous virtual lags. Moreover, the superiority of SAULAs is evident in Fig.\,\ref{Charac}, where they produce a hole-free co-array with highest number of continuous virtual lags.
				
				Next, Table\,\ref{my-label} lists the \textcolor{black}{total} number of uDOFs, S.E, \textcolor{black}{total} number of holes, and CVA \textcolor{black}{in the $\mathbb{SDC}$.}
				The NA has the lowest DOFs capacity and the smallest CVA. Moreover, since the holes appear in the central part of its co-array, the S.E is also reduced. Notably, OCA-SDCA contains the same number of holes as the NA but delivers higher uDOFs.  Although OSNA contains more holes than the OCA-SDCA and NADiS, resulting in a lower S.E, it produces comparatively more uDOFs. \textcolor{black}{Although TCNA produces the same number of uDOF as the NADiS, it delivers the smallest S.E, implying significant waste of aperture resources.} Meanwhile, RSNA achieve the same number of uDOFs as OSNA. The NSANCS, ZRSA, and Tdis-ULAs obtain maximum S.E, with the latter offering more uDOFs. Despite the TSAULAs containing four holes, they generate higher uDOFs and a larger CVA than the Tdis-ULAs, while also achieving superior S.E compared with RSNA, OSNA, and OCA-SDCA. The SAULAs exhibit maximum S.E and highest uDOFs with largest CVA among all the designs.
								\subsection{Weight Functions and Mutual Coupling Leakage}
					The following section evaluates the weight functions and MC leakage of different sparse arrays. We characterize the MC model by $c_{1} = e^{j \pi/3}$, with $c_{\textcolor{black}{q}} = c_{1}e^{(-j(\textcolor{black}{q}-1)\pi/8)}/\textcolor{black}{q}$ for $2 \le \textcolor{black}{q} \le B,$ where $B$ = 100. Fig.\,\ref{Weight} displays the weight functions, where the NA and NADiS arrays are observed to own the highest values for the first three weight functions $(w(1) = 6, w(2) = 5, w(3) = 4)$, indicating a prominent presence of dense ULA in their physical structures. Although the values of NSANCS, Tdis-ULAs, and GSNA are identical for $(w(1) = 5, w(2) = 4, w(3) = 3)$, they are smaller than those of NADiS. AULAs and SAULAs share identical weight function values $(w(1) = 3, w(2) = 2, w(3) = 3)$. 
					The proposed TSAULAs design $(w(1) = 0, w(2) = 3, w(3) = 1)$ offers excellent sparsity among all other designs, as evidenced  by its smallest weight function values. The Co-TSAULAs inherit this sparsity property of TSAULAs and renders smaller values of $(w(1) = 1, w(2) = 3, w(3) = 2)$ than the Tdis-ULA, SAULAs, and STNA $(w(1) = 2, w(2) = 5, w(3) = 2)$. 
					
					Moreover, a summary of weight functions and coupling leakage is provided in Table\,\ref{MCT}. We observe that the NA and NADiS exhibit the highest value of ${L_C}$, indicating that their estimation performance is significantly affected by the MC. The STNA, SAULAs, and AULAs induce weaker MC than NSANCS, Tdis-ULAs, and GSNA. Although the OCA-SDCA exhibits a relatively small ${L_C}$ value, it was shown earlier that the OCA-SDCA yields significantly lower uDOFs and produces numerous holes. Notably, the proposed TSAULAs design owns the smallest value of ${L_C}$ in \textcolor{black}{Table\,II}, implying that it is the least sensitive to MC. Meanwhile, Co-TSAULAs yield a moderate ${L_C}$ value, lower than those of SAULAs and STNA and comparable to that of OCA-SDCA. These results reflect that the existing designs are constrained by a three-way performance trade-off, whereas the proposed AULAs framework resolves this limitation through its distinctive advantages.
					
				\textcolor{black}{\subsection{uDOFs and Mutual Coupling Leakage }
						The uDOFs are a critical quantitative metric used to characterize performance based on the number of sources a sparse array can detect. Fig.~\ref{DOFVsSen} compares the uDOFs capacity of different sparse arrays against an increasing number of elements. Since the NA are designed to primarily optimize the $\mathbb{DC}$, it produces the smallest uDOFs in Fig.~\ref{DOFVsSen}(a), confirming the inefficient utilization of the $\mathbb{SDC}$. Though the OCA-SDCA is a coprime-based structure, it achieves more uDOFs than the NA due to more effective utilization of the $\mathbb{SDC}$. While the uDOFs of most existing designs are comparable as they display almost similar curves, the proposed designs, especially SAULAs and TSAULAs, produce higher uDOFs and convincingly outperform all the existing designs. For instance, the TSAULAs with only $N = 19$ elements offer the same DOFs as the OCA-SDCA with $N = 25$ elements. }
					\begin{figure} \vspace{-0.5em} 
						\hspace{0.5em} 	\includegraphics[height=3.5cm,width=8.5cm]{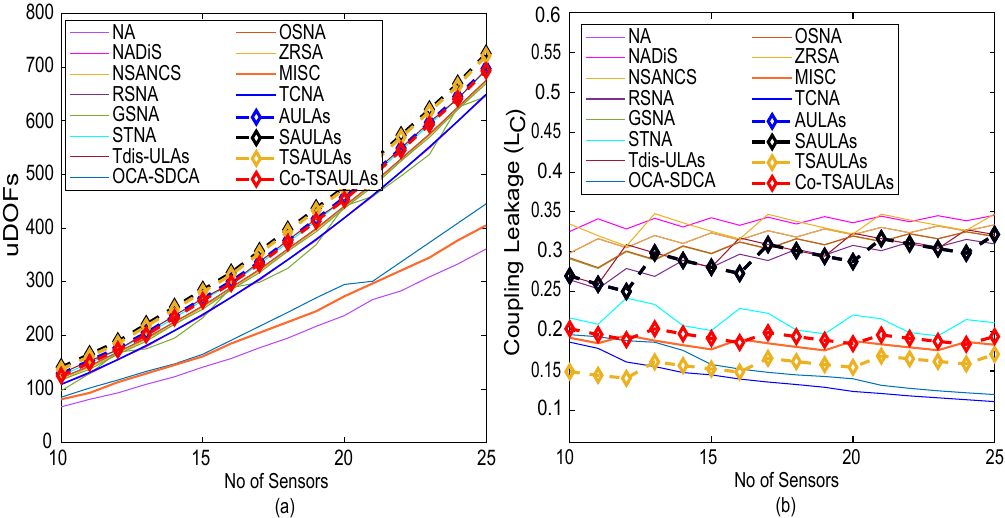} \vspace{-1em} 
						\caption{\textcolor{black}{uDOFs and MC leakage performances.}} \vspace{-1.5em}
						\label{DOFVsSen}
					\end{figure} 
					
					\textcolor{black}{Next, we simulated the MC leakage performance against an increasing number of elements. Fig.~\ref{DOFVsSen}(b) displays the coupling leakage values of various designs. As observed, most designs based on the nested topology suffer from strong MC because they contain dense sections with \textit{d}-spaced elements. STNA reduces these spacings by splitting and transforming the dense ULA of NA, thereby inducing a weak MC. Although OCA-SDCA has the smallest ${L_C}$ value for $N \ge 17$, it outperforms only NA in terms of uDOFs. Hence, the OCA-SDCA, like other existing design, fails to exhibit a balanced performance between the two metrics. In contrast, TSAULAs and Co-TSAULAs have a striking balance between the uDOFs and MC, especially the TSAULAs. When $N$ is smaller, TSAULAs experience least MC effects while simultaneously providing the second-highest uDOFs. Although TSAULAs design ranks third only to TCNA and OCA-SDCA in terms of MC for higher $N$, its uDOFs are significantly larger than both designs while maintaining higher S.E in contrast to OCA-SDCA and TCNA.}

					\vspace{-1em}\subsection{DOA Spectrum in the Absence of Mutual Coupling}
					\textcolor{black}{This simulation compares the estimation performance without MC.} We assume $Z = 55$ sources, located at $ {\theta _z} =  -49^\circ + 110^\circ\left( {z - 1} \right)/54,$ $z \in \left\langle {1, Z} \right\rangle$, impinging on the arrays with an equal power. Fig.\,\ref{MSA} plots the MUSIC spectrum of different sparse arrays alongside the root mean square error (RMSE), where $T = 600$ and $SNR = 0$ dB. \textcolor{black}{We observed that the MUSIC spectra of OCA-SDCA are of low quality and fail to generate peaks for all the incoming source signals, which can be expected because the uDOFs of OCA-SDCA are nearly equal to the number of sources. Hence, the OCA-SDCA renders the largest RMSE value. Although the NADiS performs better than the OCA-SDCA, it still lacks accuracy in resolving the sources, thus producing the second-largest RMSE value.} The SAULAs scheme detects and resolves the source DOAs most accurately among the other sparse arrays and provides a high-quality spectrum, with the TSAULAs demonstrating the second-best performance. In particular, SAULAs improve the detection quality by approximately 98$\%$ and 33$\%$ compared to the OCA-SDCA and ZRSA, respectively.
					
					\begin{figure} [btp] \vspace{-0.2em} 
						\begin{minipage}[b]{1\columnwidth} 
							\centering
							\includegraphics[width=8cm, trim=.35cm .35cm .5cm .5cm, height = 6.5cm]{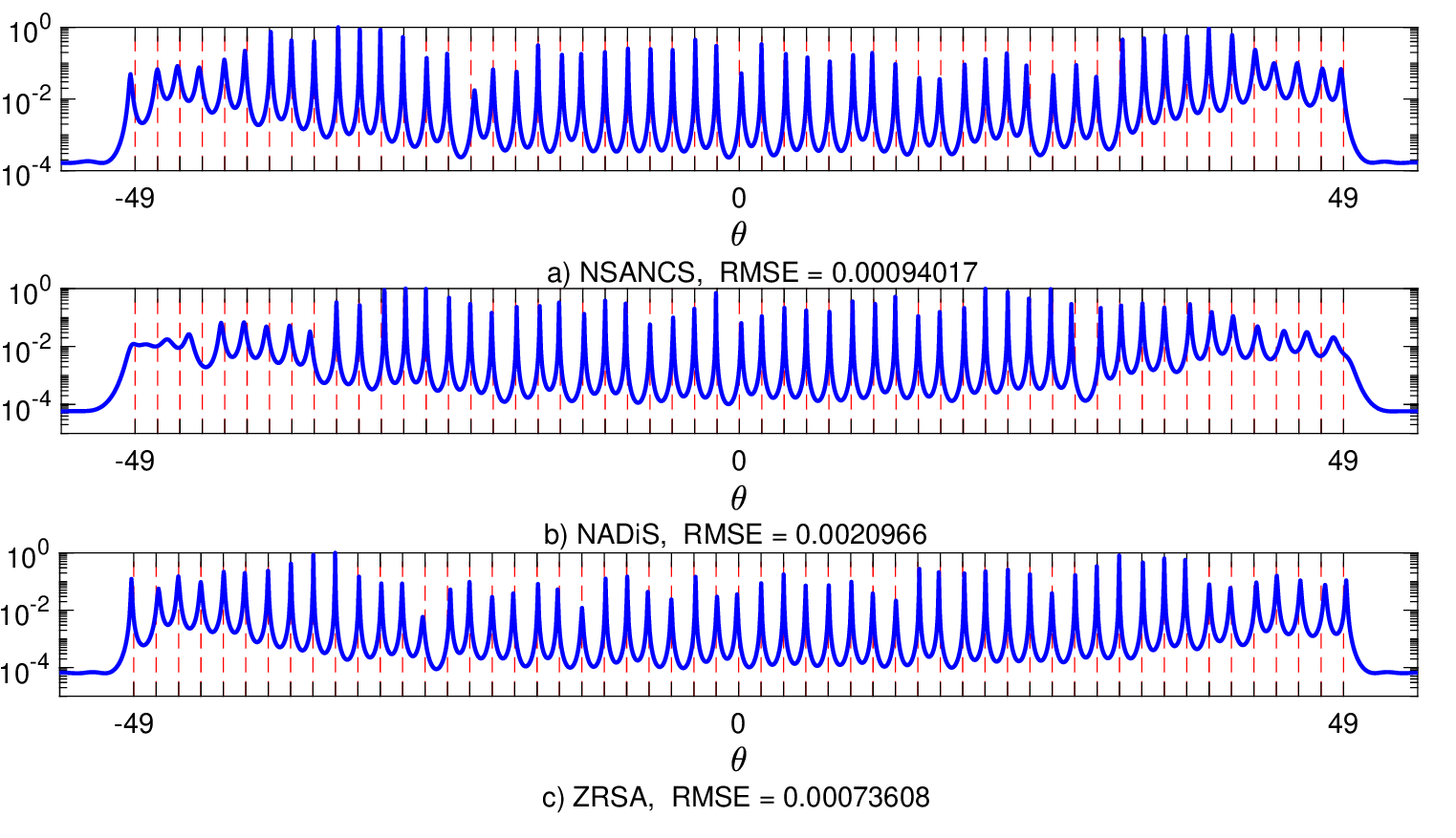}  \vspace{0.5em} \hspace{-0.5em}
						\end{minipage} 
						\begin{minipage}[b]{1\columnwidth}
							\centering
							\includegraphics[width=8cm, trim=.35cm .35cm .5cm .5cm, height = 6.5cm]{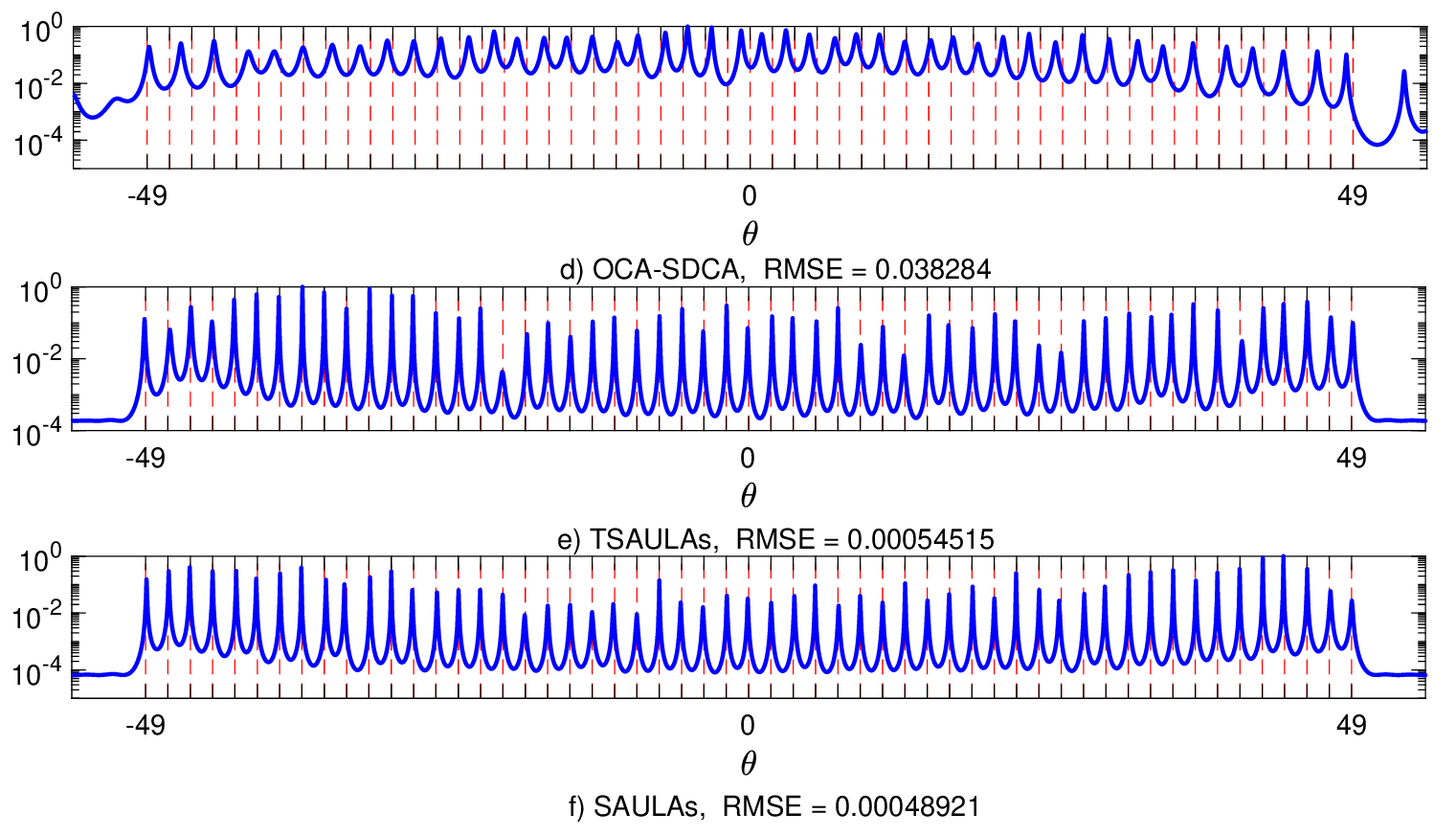}
						\end{minipage} \hfill{} \vspace{-1.75em}
						\caption{MUSIC spectra of NSANCS, NADiS, ZRSA, OCA-SDCA, TSAULAs, and SAULAs in the absence of MC, when $N = 12$ and  sources are located at $ {\theta _z} =  -49^\circ + 98^\circ\left( {z - 1} \right)/54,$ $z \in \left\langle {1, 55} \right\rangle.$ }\label{MSA}
					\end{figure}
					\begin{figure}  [btp] \vspace{-1em} 
						\begin{minipage}[b]{1\columnwidth} 
							\centering
							\includegraphics[width=8cm, trim=.35cm .35cm .5cm .5cm, height = 6.5cm]{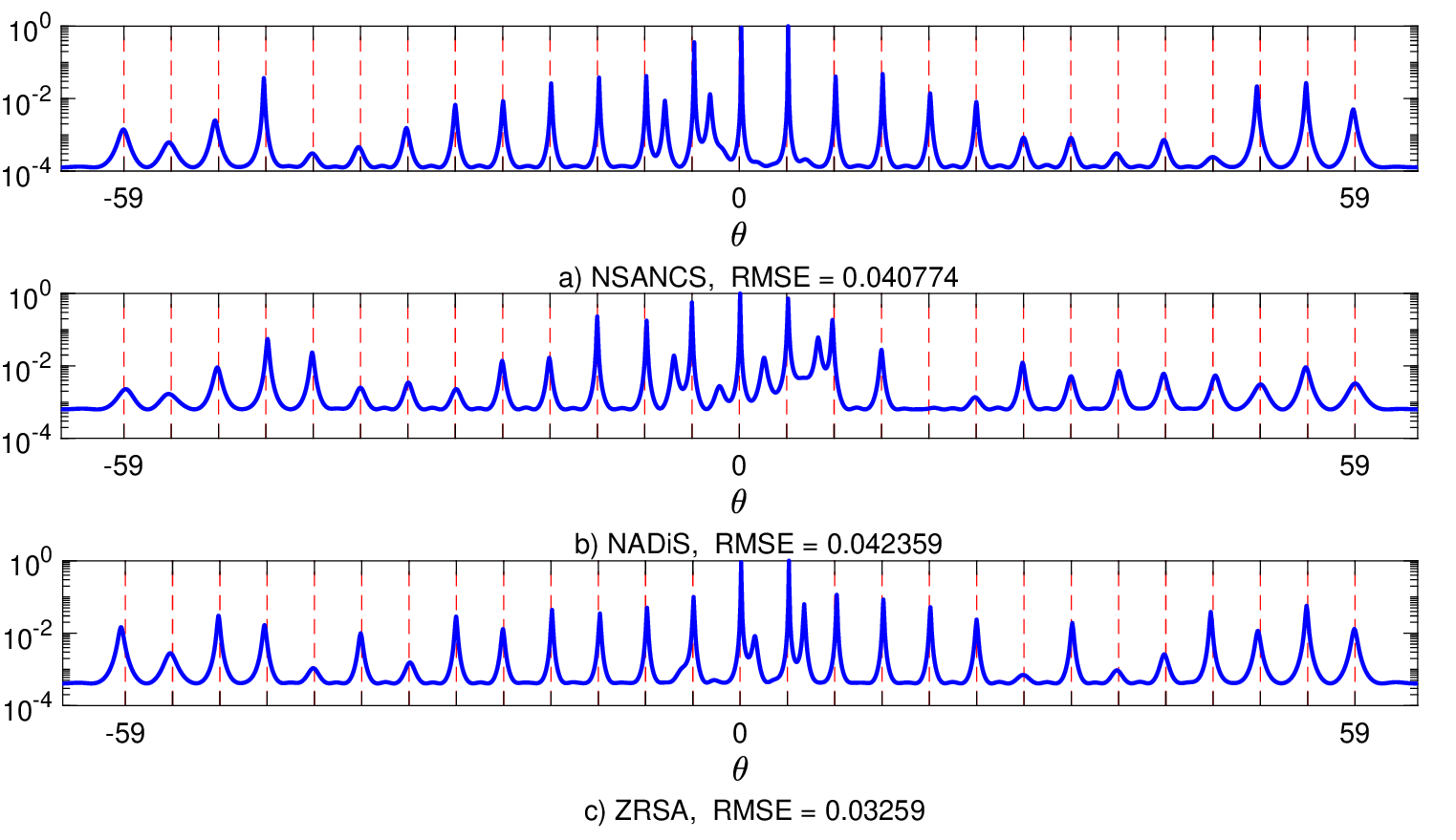} \vspace{0.7em}\hspace{-0.5em}
						\end{minipage} 
						\begin{minipage}[b]{1\columnwidth}
							\centering
							\includegraphics[width=8cm, trim=.35cm .35cm .5cm .5cm, height = 6.5cm]{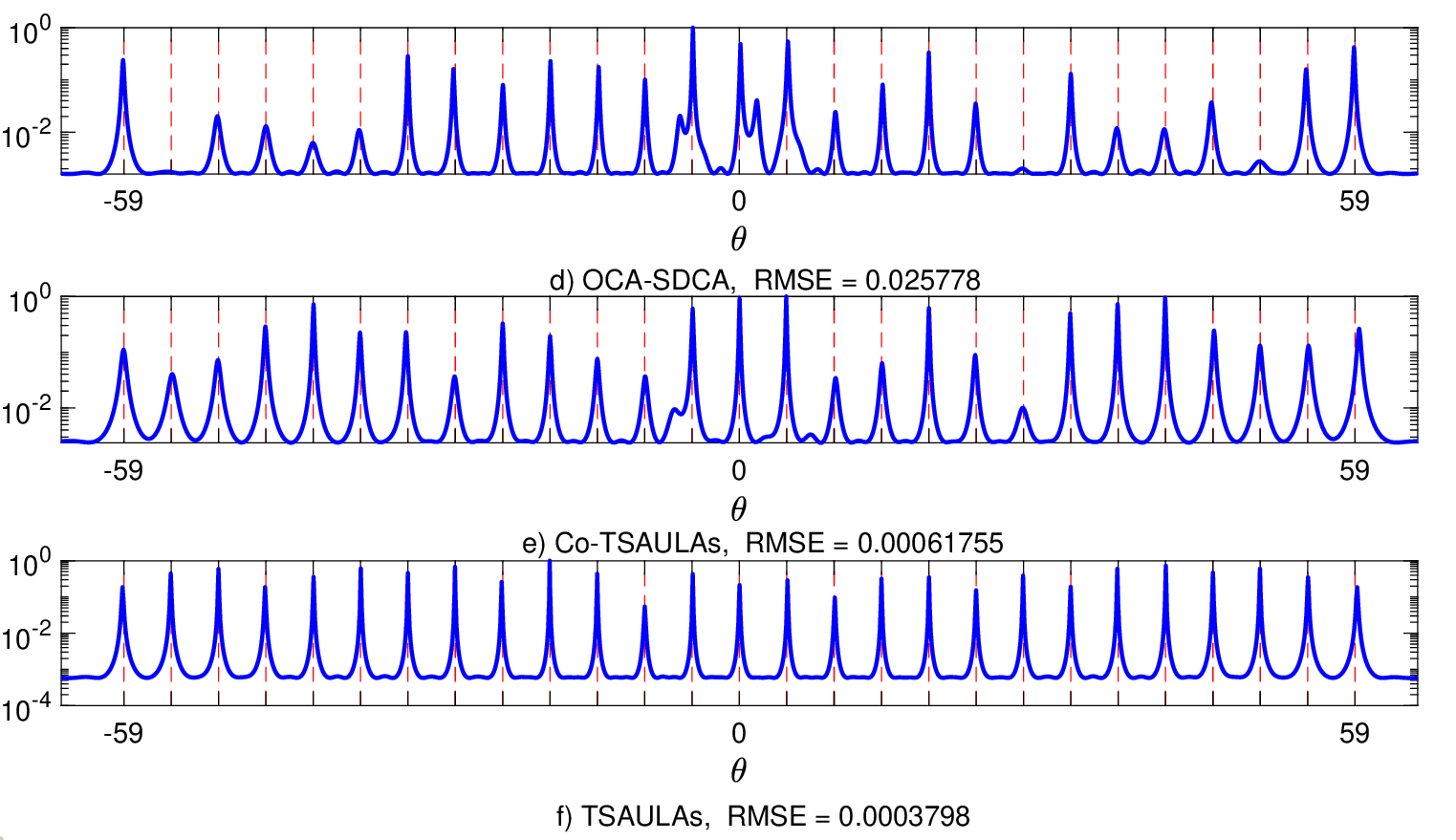}
						\end{minipage} \vspace{-1.75em}
						\caption{MUSIC spectra of NSANCS, NADiS, ZRSA, OCA-SDCA, Co-TSAULAs, and TSAULAs in the presence of MC, when $N = 12$ and sources are located at $ {\theta _z} =  -59^\circ + 118^\circ\left( {z - 1} \right)/26,$ $z \in \left\langle {1, 27} \right\rangle.$}\label{MCP}
					\end{figure}

					\subsection{DOA Spectrum Under Mutual Coupling}
					This simulation analyzes the estimation performance in the presence of strong MC, \textcolor{black}{using the coupling model defined in Section V-B}. We assume $Z = 27$ sources, located at $ {\theta _z} =  -59^\circ + 118^\circ\left( {z - 1} \right)/26,$ $z \in \left\langle {1, Z} \right\rangle$, impinging on sparse arrays with equal power.  The SNR is fixed to 0 dB and $T = 1000$ snapshots are used to estimate the source DOAs in Fig.\,\ref{MCP}. Since the Co-TSAULAs design is more robust against MC than the SAULAs, we used it for the performance comparison. As observed, the NADiS performs poorly, considering that the sources are fewer now. This is because NADiS is comparatively more sensitive to MC, as witnessed in Table\,II. \textcolor{black}{The performance of NSANCS and ZRSA also deteriorated sharply, drawing an adverse MC impact from their dense ULAs. Although the OCA-SDCA performs better than the NSANCS and ZRSA}, its peaks are not as aligned with the true source directions as those of the Co-TSAULAs. Meanwhile, the TSAULAs outperform all the designs by producing a high-quality MUSIC spectrum and detecting the sources most accurately. This confirms the excellent benefits of the unified AULAs family, where TSAULAs perform best in the presence of MC, while SAULAs stand out in the absence of MC.  
					\section{Conclusion}
					This paper introduces a new array design framework, termed AULAs, for DOA estimation of NCS. The AULAs configure three ULAs and two separate elements. Through intelligent relocation of elements, three new variants of AULAs are synthesized, each exhibiting unique features. The AULAs family exploits the joint sum-difference co-array more effectively than the existing design schemes. Closed-form expressions are provided for physical element placement, co-array analysis, and weight functions. Compared with existing designs for NCS, the AULAs structures demonstrate superior DOA estimation performance, higher uDOFs, a larger virtual aperture, and reduced mutual coupling, all without requiring additional elements. \textcolor{black}{Furthermore, the proposed AULAs family operates within the well-established non‑circular signal framework, yielding physically realizable arrays with significant performance benefits, as demonstrated throughout the paper.} The results of extensive simulations also confirm these conclusions.  
							\section*{ Appendix A}
							\textcolor{black}{Throughout this appendix, we use the array design equations of the AULAs as defined in (\ref{PV}), (\ref{PV1}),  (\ref{AULAA})-(\ref{AULAA3})}: 
							
							The positive difference set between ${\mathbb{P}_{\rm{a_2}}}$ and ${\mathbb{P}_{\rm{a_1}}}$ is ${\mathbb{P}_{\rm{a_2}}} - {\mathbb{P}_{\rm{a_1}}} = \left\langle {0, {N_1} - 1} \right\rangle (M) + \left\langle {M/2, M - 1} \right\rangle$, which is continuous in range $\left\langle {0, {N_1}M - 1} \right\rangle$ except for holes at ${\mathbb{DH}} = {\mathbb{DH}_{\rm{1}}} \cup {\mathbb{DH}_{\rm{2}}} \cup {\mathbb{DH}_{\rm{3}}} =  \left\langle {0, {N_1} - 1} \right\rangle (M) + \left\langle {0, M/2 - 1} \right\rangle$, where ${\mathbb{DH}_{\rm{1}}} = \left\langle {0, {N_1} - 1} \right\rangle (M)$,  ${\mathbb{DH}_{\rm{2}}} = \left\langle {1, {N_1} - 1} \right\rangle (M) + \left\langle {1, M/2 - 1} \right\rangle$ and  ${\mathbb{DH}_{\rm{3}}} = \left\langle {1,  M/2 - 1} \right\rangle$. It is evident that the holes, ${\mathbb{DH}_{\rm{1}}}$, can be filled by ${\mathbb{P}_{\rm{a_1}}} - {\mathbb{P}_{\rm{a_1}}} = \left\langle {0, {N_1} - 1} \right\rangle (M).$ Likewise, ${\mathbb{DH}_{\rm{2}}}$ can be filled by ${\mathbb{P}_{\rm{a_3}}} - {\mathbb{P}_{\rm{a_1}}} = \left\langle {1, {N_1}} \right\rangle (M) + \left\langle {1, M/2 - 1} \right\rangle$ and ${\mathbb{DH}_{\rm{3}}}$ can be filled by ${\mathbb{P}_{\rm{a_2}}} - {\mathbb{P}_{\rm{a_2}}} = \left\langle {0, M/2 - 1} \right\rangle$. Therefore, the entire set of holes ${\mathbb{DH}}$ is filled and we establish the virtual lags are continuous till  ${N_1}M - 1.$  
							We also noticed that the difference set ${\mathbb{P}_{\rm{a_3}}} - {\mathbb{P}_{\rm{a_1}}}$ provide continuous virtual lags, located in  ${N_1}M + \left\langle {1, M/2 - 1} \right\rangle$. 
							Analyzing the difference sets, ${\mathbb{P}_{\rm{a_3}}} - {\mathbb{P}_{\rm{a_1}}}$ and ${\mathbb{P}_{\rm{a_2}}} - {\mathbb{P}_{\rm{a_1}}}$, it can be concluded that the positive difference co-array of AULAs generates continuous lags up to $ {l_{a_1}} =  {N_1}M + M/2 - 1$, except for a single hole at ${N_1}M$. We will prove that this hole is filled by the sum co-array.
							
							In the second part, we analyze the virtual locations generated by the sum co-array. The positive sum set between ${\mathbb{P}_{\rm{a_1}}}$ and ${\mathbb{P}_{\rm{a_2}}}$ is ${\mathbb{P}_{\rm{a_1}}} + {\mathbb{P}_{\rm{a_2}}} = \left\langle {{N_1} - 1, 2{N_1} - 2} \right\rangle (M) + \left\langle {M/2, M - 1} \right\rangle$, which has continuous virtual lags in the range $\left\langle {{N_1}M - M/2, 2{N_1}M - M - 1} \right\rangle$, except for holes at ${\mathbb{SH}}_{\rm{L_1}} = {\mathbb{SH}}_{\rm{l_{11}}} \cup {\mathbb{SH}}_{\rm{l_{12}}} = \left\langle {{N_1}, 2{N_1} - 2} \right\rangle (M) + \left\langle {0, M/2 - 1} \right\rangle$, where ${\mathbb{SH}}_{\rm{l_{11}}} = \left\langle {{N_1}, 2{N_1} - 2} \right\rangle (M).$ ${\mathbb{P}_{\rm{a_1}}} + {\mathbb{P}_{\rm{a_1}}} = \left\langle {0, 2{N_1} - 2}\right\rangle (M)$ fill the set of holes ${\mathbb{SH}}_{\rm{l_{11}}}.$ 
							Likewise, ${\mathbb{SH}}_{\rm{l_{12}}} = \left\langle {{N_1}, 2{N_1} - 2} \right\rangle (M) + \left\langle {1, M/2 - 1} \right\rangle$ can be filled by ${\mathbb{P}_{\rm{a_1}}} + {\mathbb{P}_{\rm{a_3}}} = \left\langle {{N_1}, 2{N_1} - 1} \right\rangle (M) + \left\langle {1, M/2 - 1} \right\rangle.$ 
							
							Moreover, ${\mathbb{P}_{\rm{a_2}}} + {\mathbb{P}_{\rm{a_2}}}$ is continuous in range $(2{N_1}M - M) + \left\langle {0, M - 2} \right\rangle$. While ${\mathbb{P}_{\rm{a_2}}} + {\mathbb{P}_{\rm{a_3}}}$ maintains the continuity in the range $ (2{N_1}M  - M) + \left\langle {M/2 + 1, 3M/2 - 2} \right\rangle$, extending the continuous ULA segment up to $2{N_1}M + M/2 - 2$. Finally, the sum set between ${\mathbb{P}_{\rm{a_3}}} + {\mathbb{P}_{\rm{a_3}}}$ is continuous in the range of $(2{N_1}M - M) + \left\langle {M + 2, 2M - 2} \right\rangle$ and further increases the range of continuous virtual part to $2{N_1}M + M - 2$. Hence, the sum co-array is found to be continuous in  $\left\langle {{l_{a_2}}, {l_{a_3}}} \right\rangle$, where 
							\begin{equation}
								\left\{ {\begin{array}{*{20}{l}}
										{l_{a_2}} = {N_1}M - M/2,\\
										{l_{{a_3}}} = {2{N_1}M + M - 2.}
								\end{array}} \right.
							\end{equation} 
							As can be noticed,  ${\mathbb{P}_{\rm{a_1}}} + {\mathbb{P}_{\rm{a_1}}}$ yields a virtual lag at ${N_1}M$ that overlaps exactly with the hole in the difference co‑array. Consequently, the joint sum-difference co-array of AULAs is proved to be continuous in $\left\langle {0, {2{N_1}M + M - 2}} \right\rangle$, and the total DOFs are obtained by $4{N_1}M + 2M - 3$.
							\section*{Appendix B}
							\textcolor{black}{Since \textbf{Lemma 2} covers the co-array properties of SAULAs, this proof relies on the definitions provided in equations (\ref{PV}), (\ref{PV1}), (\ref{SAULAA})-(\ref{SAULAA3}), which will be used throughout this appendix}: 
							
							Lets first examine the virtual locations that the positive  difference sets of ${\mathbb{P}_{\rm{S}}}$ can obtain. Recall that the SAULAs is a shifted version of AULAs; therefore, their associated inter-element spacing sets are identical. Meanwhile, the difference between their last and first physical elements remains the same. Thus, it is easy to conclude that the difference co-array properties of the AULAs also stands true for SAULAs. Such as, the positive difference sets of SAULAs also possess continuous virtual lags $ {l_{s_1}} = \left\langle {0, N_{1}M + M/2 - 1} \right\rangle$, except a single hole occurring at $N_{1}M$ (see Fig. 4). We will analyze that this hole align with the virtual lag in the sum co-array.
							
							Next, we discuss the virtual lags provided by the sum co-array of SAULAs. The positive sum set between ${\mathbb{P}_{\rm{s_1}}}$ and ${\mathbb{P}_{\rm{s_2}}}$ is ${\mathbb{P}_{\rm{s_1}}} + {\mathbb{P}_{\rm{s_2}}} = \left\langle {{N_1}, 2{N_1} - 1} \right\rangle (M) + \left\langle {M/2, M - 1} \right\rangle$, which generates continuous lags in $\left\langle {{N_1}M + M/2, 2{N_1}M - 1} \right\rangle$, except for holes at ${\mathbb{SH}}_{\rm{L_1}} = {\mathbb{SH}}_{\rm{l_{11}}} \cup {\mathbb{SH}}_{\rm{l_{12}}} = \left\langle {{N_1} + 1,2{N_1} - 1} \right\rangle (M) + \left\langle {0, M/2 - 1} \right\rangle$, where ${\mathbb{SH}}_{\rm{l_{11}}} = \left\langle {{N_1} + 1, 2{N_1} - 1} \right\rangle (M).$ Clearly, the sum set ${\mathbb{P}_{\rm{s_1}}} + {\mathbb{P}_{\rm{s_1}}} = \left\langle {1, 2{N_1} - 1}\right\rangle (M)$ can fill the set of holes ${\mathbb{SH}}_{\rm{l_{11}}}$. Also, it generates virtual lag at ${N_1}M$, which splices with the hole in the difference co-array. Likewise, ${\mathbb{SH}}_{\rm{l_{12}}} = \left\langle {{N_1} + 1, 2{N_1} - 1} \right\rangle (M) + \left\langle {1, M/2 - 1} \right\rangle$ can be filled by ${\mathbb{P}_{\rm{s_1}}} + {\mathbb{P}_{\rm{s_3}}} = \left\langle {{N_1} + 1, 2{N_1}} \right\rangle (M) + \left\langle {1, M/2 - 1} \right\rangle.$ Besides, ${\mathbb{P}_{\rm{s_2}}} + {\mathbb{P}_{\rm{s_2}}}$ is continuous in range $(2{N_1}M - M) + \left\langle {M, 2M - 2} \right\rangle$ and ${\mathbb{P}_{\rm{s_2}}} + {\mathbb{P}_{\rm{s_3}}}$ produces continuous lags in $ 2{N_1}M + \left\langle {M/2 + 1, 3M/2 - 2} \right\rangle$. Hence, the virtual lags are continuous up to $2{N_1}M + 3M/2 - 2$. Meanwhile, the sum set ${\mathbb{P}_{\rm{s_3}}} + {\mathbb{P}_{\rm{s_3}}}$ has hole-free lags in $ \left\langle {2{N_1}M + M + 2, 2{N_1}M + 2M - 2} \right\rangle$, extending the uniform range up to $2{N_1}M + 2M - 2$. 
							
							As a result, the sum co-array of SAULAs is continuous in $\left\langle {{l_{s_2}}, {l_{s_3}}} \right\rangle$, where ${l_{s_2}} = {N_1}M + M/2$ and 
							$	{l_{s_3}} =	2{N_1}M + 2M - 2$. Meanwhile, the hole in  difference co-array is also filled by the sum co-array. Since the sum (with its negative) co-array and difference co-array develop  symmetric virtual locations, the joint sum-difference co-array of SAULAs is proved to be continuous in $\left\langle {{-2{N_1}M - 2M + 2}, {2{N_1}M + 2M - 2}} \right\rangle$, and thus, the total DOFs are obtained by $4{N_1}M + 4M - 3$.

							\section*{Appendix C}
							\textcolor{black}{As \textbf{Lemma 3} deals with the co-array properties of TSAULAs, in this section, we use the array design equations of TSAULAs provided in (\ref{set_transformed}), (\ref{TSAULAA})-(\ref{TSAULAA3})}:
							
							The positive difference set between ${\mathbb{P}_{\rm{t_{2a}}}}$ and ${\mathbb{P}_{\rm{t_1}}}$ is ${\mathbb{P}_{\rm{t_{2a}}}} - {\mathbb{P}_{\rm{t_1}}} = \left\{ {\left\langle {0, {N_1} - 1} \right\rangle (M) + \left\langle {1, M/2 - 1} \right\rangle (2)} \right\}$, which is continuous in $\left\langle {0, {N_1}M - 2} \right\rangle$, except for holes at ${\mathbb{DH}} = {\mathbb{DH}_{\rm{1}}} \cup {\mathbb{DH}_{\rm{2}}} \cup  {\mathbb{DH}_{\rm{3}}} \cup {\mathbb{DH}_{\rm{4}}}$ , where ${\mathbb{DH}_{\rm{1}}} = \left\langle {0,{N_1} - 1} \right\rangle (M)$, ${\mathbb{DH}_{\rm{2}}} = 1 + \left\{ {\left\langle {1,{N_1}} \right\rangle (M) + \left\langle {0,M/2 - 2} \right\rangle (2)} \right\}$, ${\mathbb{DH}_{\rm{3}}} = {(M - 1) + \left\langle {0, {N_1}-2} \right\rangle (M)}$. It is evident that the holes ${\mathbb{DH}_{\rm{1}}}$ can be filled by ${\mathbb{P}_{\rm{t_1}}} - {\mathbb{P}_{\rm{t_1}}} = \left\langle {0,{N_1} - 1} \right\rangle (M).$ Likewise, ${\mathbb{DH}_{\rm{2}}}$ can be filled by ${\mathbb{P}_{\rm{t_3}}} - {\mathbb{P}_{\rm{t_1}}} = 1 + \{\left\langle {1,{N_1}} \right\rangle (M) + \left\langle {0, M/2 - 2} \right\rangle (2)\}$ and ${\mathbb{DH}_{\rm{3}}}$ can be filled by ${\mathbb{P}_{\rm{t_{1}}}} + {\mathbb{P}_{\rm{t_{2b}}}} =  - \{(M - 1) + \left\langle {0, {N_1}-2} \right\rangle (M)\}$. While the hole on ${\mathbb{DH}_{\rm{4}}} = {1}$ can be filled by  ${\mathbb{P}_{\rm{t_{2a}}}} + {\mathbb{P}_{\rm{t_{2b}}}}$, which generates the virtual lags at $\left\langle -1, (M/2 - 2)(-2) \right\rangle$. Since the sum (with its negative) co-array develops a symmetric virtual locations, it can be inferred that the virtual lags are continuous up to $\left\langle {0, {N_1}M - 2} \right\rangle.$

							Next, we consider the sum set between ${\mathbb{P}_{\rm{t_1}}}$ and ${\mathbb{P}_{\rm{t_{2a}}}}$ for the TSAULAs. ${\mathbb{P}_{\rm{t_1}}} + {\mathbb{P}_{\rm{t_{2a}}}} = {\left\langle {1,{N_1}} \right\rangle (M) + \left\langle {{N_1}M/2 - M/2 + 1,{N_1}M/2 - 1} \right\rangle (2)} $ is continuous in $ \left\langle {N_1}M + 2, 2{N_1}M - 2 \right\rangle$, except for holes at ${\mathbb{SH}}_{\rm{L_{1}}} = {\mathbb{SH}}_{\rm{l_{11}}} \cup {\mathbb{SH}}_{\rm{l_{12}}} \cup {\mathbb{SH}}_{\rm{l_{13}}}$, where ${\mathbb{SH}}_{\rm{l_{11}}} = \left\langle {{N_1} + 1, 2{N_1} - 1} \right\rangle (M), {\mathbb{SH}}_{\rm{l_{12}}}  = \left( {{N_1}M + 1} \right) + \left\{ {\left\langle {1:{N_1}-1} \right\rangle \left( M \right) + \left\langle {0:M/2 - 2} \right\rangle \left( 2 \right)} \right\}$, ${\mathbb{SH}}_{\rm{l_{13}}} = \left\langle {0:{N_1} - 1} \right\rangle \left( M \right) + \left( {{N_1}M - 1} \right)$.
							Based on (\ref{TSAULAA}), we observe that the holes in ${\mathbb{SH}}_{\rm{l_{11}}}$ are filled by ${\mathbb{P}_{\rm{{t_1}}}} + {\mathbb{P}_{\rm{t_{1}}}} = \left\langle {1, 2{N_1} - 1} \right\rangle (M)$ and ${\mathbb{P}_{\rm{t_1}}} + {\mathbb{P}_{\rm{t_3}}}  = \left( {{N_1}M + 1} \right) + \left\{ {\left\langle {1:{N_1}} \right\rangle \left( M \right) + \left\langle {0:M/2 - 2} \right\rangle \left( 2 \right)} \right\}$ fill the holes ${\mathbb{SH}}_{\rm{l_{12}}}$.
							As the resulting co-array is combination of sum and difference sets, the remaining holes locations, ${\mathbb{SH}}_{\rm{l_{13}}}$, can be filled by difference set ${\mathbb{P}_{\rm{t_1}}} - {\mathbb{P}_{\rm{t_{2a}}}}$ = $\left\langle {0:{N_1}} \right\rangle \left( M \right) + \left( {{N_1}M - 1} \right).$
							
							Moreover, ${\mathbb{P}_{\rm{t_{2a}}}} + {\mathbb{P}_{\rm{t_{2a}}}}$ is continuous over $\left\langle {2{N_1}M - M + 4, 2{N_1}M + M - 4} \right\rangle$ except at holes specified by ${\mathbb{SH}_{\rm{L_2}}} = \left( {2{N_1}M - M + 5} \right) + \left\langle {0:M - 5} \right\rangle \left( 2 \right)$. From the definition of the sum set ${\mathbb{P}_{\rm{t_1}}} + {\mathbb{P}_{\rm{t_3}}}$, we notice that it produces virtual lags at these locations, except for the hole at $\left( {2{N_1}M - M + 5} \right)$, which is filled by ${\mathbb{P}_{\rm{t_1}}} - {\mathbb{P}_{\rm{t_{2b}}}}$. 
							Similarly, ${\mathbb{P}_{\rm{t_{2a}}}} + {\mathbb{P}_{\rm{t_3}}}$ is continuous in $ \left\langle 2{N_1}M + 3, 2{N_1}M + 2M - 5 \right\rangle$, except for holes at ${\mathbb{SH}}_{\rm{l_{23}}} = {(2{N_1}M + 2) + \left\langle {1, M - 4} \right\rangle (2)}.$
							These holes can be filled by  $({\mathbb{P}_{\rm{t_{2b}}}} + {\mathbb{P}_{\rm{t_{2b}}}}) \cup ({\mathbb{P}_{\rm{t_3}}} - {\mathbb{P}_{\rm{t_{2b}}}}),$ in which case, the holes located at ${(2{N_1}M + 2) + \left\langle {1, M/2 - 2} \right\rangle (2)}$ 
							are filled by ${\mathbb{P}_{\rm{t_{2b}}}} + {\mathbb{P}_{\rm{t_{2b}}}}$, while the the difference set  ${\mathbb{P}_{\rm{t_3}}} - {\mathbb{P}_{\rm{t_{2b}}}} = (2{N_1}M + M - 2 + \left\langle {1:M/2 - 1} \right\rangle \left( 2 \right))$ fill the remaining holes in the range ${(2{N_1}M + M - 2) + \left\langle {1, M/2 - 2} \right\rangle (2)}$. 
							Hence, the lags are continuous up to  $2{N_1}M + 2M - 5$. 
							
							Finally, the positive sum set  ${\mathbb{P}_{\rm{t_3}}} + {\mathbb{P}_{\rm{t_3}}} = {(2{N_1}M + M) + \left\langle {1, (M + 2)/2} \right\rangle}(2)$ is continuous in the range $\left\langle 2{N_1}M + M + 2, 2{N_1}M + 2M + 2 \right\rangle$, except for holes at ${\mathbb{SH}}_{\rm{l_{33}}} = {(2{N_1}M + M + 1) + \left\langle {1, M - 4} \right\rangle (2)}$. Among these, the holes at ${(2{N_1}M + M + 1) + \left\langle {1, M/2 - 3} \right\rangle (2)}$ overlap with the virtual lags produced by ${\mathbb{P}_{\rm{t_{2a}}}} + {\mathbb{P}_{\rm{t_3}}}$. As a result, the joint sum-difference co-array of the TSAULAs has no gaps (holes) in  $\pm {l_{t_1}}$, where $ {l_{t_1}} = 2{N_1}M + 2M - 4$. Hence, the total number of uDOFs is obtained by $4{N_1}M + 4M - 7$. 
							
							\section*{Appendix D}
							\textcolor{black}{Since \textbf{Lemma 4} covers the co-array properties of Co-TSAULAs, we use the definitions provided in equations (\ref{Co-TSAULAA})-(\ref{Co-TSAULAA3}) for this proof.}
							
							The Co-TSAULAs have the same structure as the TSAULAs, except one element from the first sparse ULA of TSAULAs is relocated and collinearly positioned to the third ULA at a distance of $d$. Therefore, the proof of difference and sum sets of Co-TSAULAs follow a similar line of reasoning and can be easily obtained by replacing the location sets of TSAULAs, ${\mathbb{P}_{\rm{t_1}}}, {\mathbb{P}_{\rm{t_2}}}, {\mathbb{P}_{\rm{t_3}}}$, in Appendix C with the locations sets for Co-TSAULAs defined in (\ref{Co-TSAULAA}), ${\mathbb{P}_{\rm{c_1}}}, {\mathbb{P}_{\rm{c_2}}}, {\mathbb{P}_{\rm{c_3}}}$. Such as, the sum set ${\mathbb{P}_{\rm{c_3}}} + {\mathbb{P}_{\rm{c_3}}} = {(2{N_1}M + M + 2) + \left\langle {M-4, 2M-6} \right\rangle} \cup {(2{N_1}M + M + 2) + \left\langle {0, M/2-3} \right\rangle}(2)$ is continuous in the range $\left\langle {2{N_1}M + M + 2, 2{N_1}M + 3M - 4} \right\rangle$
							except for holes at ${\mathbb{SH}}_{\rm{l_{33}}} = {(2{N_1}M + M + 3) + \left\langle {0, M/2-3} \right\rangle}(2)$. 
							Since $N_{1} = N - M$ for Co-TSAULAs, these holes can be  filled by ${\mathbb{P}_{\rm{c_2}}} + {\mathbb{P}_{\rm{c_3}}}$. Hence, there are no holes in the joint sum-difference co-array, containing continuous virtual lags in ${\left\langle {-2{N_1}M - 3M + 4}, {2{N_1}M + 3M - 4} \right\rangle}$. Therefore, the DOFs for Co-TSAULAs are obtained by $4{N_1}M + 6M - 7$.
							\bibliographystyle{IEEEtran}
							\bibliography{ReferencesCleaned}

						\end{document}